\def\ps@pprintTitle{%
  \let\@oddhead\@empty
  \let\@evenhead\@empty
  \let\@oddfoot\@empty
  \let\@evenfoot\@oddfoot
}
\def\BibTeX{{\rm B\kern-.05em{\sc i\kern-.025em b}\kern-.08em
    T\kern-.1667em\lower.7ex\hbox{E}\kern-.125emX}}
\newcommand{\N}{\ensuremath{\mathbb{N}}\xspace}
\newcommand{\myvec}[1]{\ensuremath{\boldsymbol{#1}}\xspace}
\newcommand{\x}{\myvec{x}}
\newcommand{\ZZ}{\mathbb{Z}}
\newcommand{\Hom}{\mathrm{Hom}}
\newcommand{\Ker}{\mathrm{Ker}}
\newcommand{\End}{\mathrm{End}}
\newcommand{\Imma}{\mathrm{Im}}
\newcommand{\glorule}{\ensuremath{{\cal F}}}
\newcommand{\HH}{\ensuremath{{\cal H}}}
\newtheorem{theorem}{Theorem}
\newtheorem{definition}{Definition}
\newtheorem{remark}{Remark}
\newtheorem{example}{Example}
\newtheorem{lemma}{Lemma}
\newtheorem{corollary}{Corollary}
\newtheorem{observation}{Remark}
\journal{Journal of Computer and System Sciences}
\begin{document}

\begin{frontmatter}

\title{Decidability and Characterization of\\ Expansivity for Group Cellular Automata}

\author[liceo]{Niccol\`o Castronuovo}
\ead{niccolo.castronuovo@studio.unibo.it}

\author[unimib]{Alberto Dennunzio\corref{cor}}
\ead{alberto.dennunzio@unimib.it}

%\author[UCA]{Enrico Formenti}
%\ead{enrico.formenti@univ-cotedazur.fr}

%\author[mfo]{Darij Grinberg}
%\ead{darijgrinberg@gmail.com}

\author[unibo]{Luciano Margara}
\ead{luciano.margara@unibo.it}
\cortext[cor]{Corresponding author}

\affiliation[liceo]{organization={Liceo ``A. Einstein''},%Department and Organization
            %addressline={ciao}, 
            city={Rimini},
            postcode={47923}, 
            %state={Rimini},
            country={Italy}}

%\address[liceo]{Liceo ``A. Einstein'', 47923 Rimini, Italy}

\affiliation[unimib]{organization={Dipartimento di Informatica, Sistemistica e Comunicazione,
  Università degli Studi di Milano-Bicocca},%Department and Organization
            addressline={Viale Sarca 336/14}, 
            city={Milano},
            postcode={20126}, 
            %state={Rimini},
            country={Italy}}
            
%\address[unimib]{Dipartimento di Informatica, Sistemistica e Comunicazione,
  %Università degli Studi di Milano-Bicocca,
  %Viale Sarca 336/14, 20126 Milano, Italy}
  
%\address[UCA]{Universit\'e C\^ote d'Azur, CNRS, I3S, France}

%\address[mfo]{Mathematisches Forschungsinstitut Oberwolfach, Schwarzwaldstr. 9-11, 77709 Oberwolfach-Walke, Germany}

\affiliation[unibo]{organization={Department of Computer Science and Engineering, University of Bologna, Cesena Campus},
            addressline={Via dell'Universita 50}, 
            city={Cesena},
            postcode={47521}, 
            %state={Rimini},
            country={Italy}}
            
%\address[unibo]{Department of Computer Science and Engineering, University of Bologna, Cesena Campus, Via dell'Universit\`a 50, Cesena, Italy}

%\begin{document}
%\maketitle

% \history{Date of publication xxxx 00, 0000, date of current version xxxx 00, 0000.}
% \doi{10.1109/ACCESS.2023.0322000}

% \title{Cellular Automata on Linear Groups
% }\author{
% \uppercase{Niccol\`o Castronuovo}\authorrefmark{1},
% \uppercase{Alberto Dennunzio}\authorrefmark{2}, 
% \uppercase{Luciano Margara}\authorrefmark{3},
% }

% \address[1]{Liceo ``A. Einstein'', Rimini, 47923, Italy (e-mail: castronuovoniccolo@gmail.com)}
% \address[2]{Dipartimento di Informatica, Sistemistica e Comunicazione, Universit\`a 
%   degli Studi di Milano-Bicocca,
%   Viale Sarca 336/14, 20126 Milano, Italy (e-mail: alberto.dennunzio@unimib.it)}

% \address[3]{Department of Computer Science and Engineering, University of Bologna, Cesena Campus, Via dell'Universit\`a 50, Cesena, Italy (e-mail: luciano.margara@unibo.it)}
% \tfootnote{``This work was partially supported by}

% \markboth
% {Dennunzio \headeretal: Cellular Automata on Matrix Groups}
% {Dennunzio \headeretal: Cellular Automata on Matrix Groups}

% \corresp{Corresponding author: Alberto Dennunzio (e-mail: alberto.dennunzio@unimib.it).}

 \begin{abstract}
Group cellular automata are continuous, shift-commuting endomorphisms of $G^\ZZ$, where $G$ is a finite group. We provide an easy-to-check   characterization of expansivity for group cellular automata on abelian groups and we prove that expansivity is a decidable property for general (non-abelian) groups.
 Moreover, we show that the class of expansive group cellular automata is strictly contained 
 in that of topologically transitive injective group cellular automata.
\end{abstract}

%%Research highlights

 % \begin{keywords}
 % Cellular Automata, Group Cellular Automata, Dynamical Behavior, Linear Groups, Matrix Groups. \end{keywords}

% \titlepgskip=-21pt

\begin{keyword}
Cellular Automata \sep Group Cellular Automata \sep Expansivity \sep Decidability
\end{keyword}
%\noindent {\bf Keywords:  Cellular Automata, Group Cellular Automata, Dynamical Behavior, Chaos, Decidability}  

%\noindent {\bf MSC2020:} ?? (primary); ?? (secondary).
\end{frontmatter}
%\end{document}

\section{Introduction} \label{introduction}
%=========================================================================
%=========================================================================
%=========================================================================

Cellular Automata (CAs) are discrete-time dynamical systems consisting of a regular grid of variables, each of them taking values from a finite set called alphabet. The 
%global state or 
configuration of a CA is a snapshot of the  values of all variables at a certain time $t$ and it evolves in discrete time steps by means of  a given local rule defining the CA. The local rule updates in a synchronous and homogeneous way each variable of the grid on the basis of the values at time $t-1$ of its neighboring variables (for an introduction to CAs theory, see for example~\cite{hedlund69, Kari2005Survey}).

CAs have been the subject of significant research and are successfully applied in various fields, including computer science, physics, mathematics, biology, and chemistry, for purposes such as simulating natural phenomena, generating pseudo-random numbers, processing images, analyzing universal computation models, and cryptography (see for example~\cite{AnghelescuIS07,MARTINDELREY20051356,DennunzioFGM21INS,kari2000,RubioEWRS04}).

One of the central challenges in CAs theory is describing the global behavior of a CA based on the analysis of its local rule. While the local rule has a finite representation (e.g., a finite table), the global behavior can encompass an arbitrarily large, potentially infinite, amount of information. In fact, the grid of variables whose values define a CA configuration may be infinite and the desired global behavior might only emerge after an arbitrarily large number of time steps.

Many properties related to the temporal evolution of general CAs have been proved to be undecidable (see Section~\ref{literature} for an overview of the main ones). 
Since in practical applications one needs to know if the CA used for modelling a certain system exhibits some specific property, this can be a severe issue. 

Fortunately, the undecidability issue of dynamical properties of CAs can be tackled by placing specific constraints on the model. In many cases, like the one we are exploring in this paper, the alphabet and the local rule are restricted to being a finite group and a homomorphism, respectively, giving rise to group CAs (GCAs). We stress that  these constraints do not at all prevent such CAs from being effective 
%at all hinder the effectiveness of such CAs 
in practical applications. In fact, GCAs can exhibit much of the complex behaviors of general CAs and they are often used in various applications (see for example~\cite{DennunzioFGM21INS,NandiKC94, RubioEWRS04}). 

 Over the past few decades, considerable effort has been devoted to study the decidability of the dynamical behavior  of CAs  defined on various algebraic structures, ranging from cyclic to general finite groups. 
See Section~\ref{literature} for an overview of the main results on this topic.

Expansivity is one of the most important and widely studied dynamical properties (see for example~\cite{Hiraide1987,KitchensSchmidt1989,Lind1982ExponentialRecurrence,Morales2019,Reddy1983}). 
It applies only to reversible systems. A reversible discrete-time dynamical system is said to be expansive if every pair of distinct configurations eventually separate under iterations of the system, either in the future or in the past. An expansive behavior of a system is a hallmark of chaotic dynamics, as it implies that even a small error in the description of the initial state of the system grows over time, making an accurate long-term prediction impossible. A classical example of an expansive system is the shift map in symbolic dynamics.

To our knowledge, there are no significant results on expansive GCAs.
This is due to the technical difficulties of studying reversible systems and the complexity that arises when dealing with non-abelian groups. Some preliminary findings exist for GCAs over $\ZZ_m$~\cite{ChangChang2016}, along with more general—but less conclusive—results concerning automorphisms of compact  groups~\cite{Aoki1975,KitchensSchmidt1989,Lind1982ExponentialRecurrence,Shah2020}. 
%=========================================================================
%=========================================================================
%=========================================================================
%\section{Results Overview}\label{results}
\paragraph{\textbf{Results Overview}}
%=========================================================================
%=========================================================================
%=========================================================================
%In this paper we study expansivity of GCAs. 
%\todo{mettere i teoremi ad ogni risultato che dicimo di dimostrare}
The main results of this paper can be summarized as follows.
\begin{enumerate}
    \item[$-$] A  GCA $\glorule$ on a group $G$ is expansive if and only if
a finite number of simpler and computable GCAs obtained decomposing $\glorule$ and $G$ are expansive. Some of them are defined on abelian groups, while  others, if any, are defined on products of  non-abelian isomorphic  simple groups (Theorem~\ref{rem}).
    \item[$-$] A GCA $\glorule$  defined on an abelian group $G$ is expansive if and only if the GCA $\HH=\glorule-\glorule^{-1}$ is positively expansive (Theorem~\ref{su_ab}).
Since positive expansivity is an easily checkable property on  abelian groups and  the inverse of a GCA is easily computable, we conclude that  expansivity is an easily checkable property as well.
    \item[$-$] For GCAs defined on products of  non-abelian isomorphic simple groups, 
expansivity is equivalent to topological transitivity  (Theorem~\ref{simpleiso}).
    \item[$-$] Expansivity for GCAs is a decidable property (Theorem~\ref{expansive}).
    \item[$-$] Expansive GCAs are topologically transitive (Theorem~\ref{espimplicatrans}), although the converse does not hold  (Example~\ref{ex:1}).
\end{enumerate}
The rest of this paper is organized as follows.

Section~\ref{literature} surveys the existing results regarding the decidability and computability of the main dynamical properties and functions across various classes of CA. In Section~\ref{cellular_a} we recall the fundamental definitions and key results concerning CAs and GCAs, 
including the basic group theory notions and results needed throughout the paper.  Section~\ref{expabel} is devoted to the characterization of expansivity for GCAs on abelian groups, while Section~\ref{expgen} contains the results concerning expansivity for GCAs on general groups. Finally, in Section~\ref{conclu} we draw our conclusions and perspectives. 
%=========================================================================
%=========================================================================
%=========================================================================
\section{Decidability and Characterization of dynamical properties}\label{literature}
%=========================================================================
%=========================================================================
%=========================================================================

Let $C$ be a possibly infinite set of CA, such as the set of GCA.
Let $P$ be a property that a CA may or may not satisfy, such as surjectivity or topological transitivity.
$P$ is  decidable for $C$ if and only if there exists an algorithm that, given a CA $\glorule \in C$, returns ``Yes" if $F$ satisfies $P$, and ``No" otherwise. 
If, instead of a property $P$, we consider a numerical function $N: C \to \mathbb{R}$, such as topological entropy or Lyapunov exponents, $N$ is computable for $C$ if and only if there exists an algorithm that, given a CA $\glorule \in C$, computes $N(\glorule)$.
Deciding a property (or computing a function) involves a computational cost that the notion of decidability (or computability) does not take into account. We will therefore say that a property is efficiently decidable (or a function is efficiently computable) if the algorithm that decides whether the property holds (or computes the function) is efficient, where efficient usually means polynomial-time.
As for CA, efficient algorithms typically analyze the structure of the local rule, which is a finite object, whereas inefficient algorithms usually operate on the space-time dynamics of the CA, which is potentially infinite in size.
The literature contains a number of results related to the decidability and computability of properties and functions, respectively,  across various classes of CA. 
Below, we list some of the most significant ones. 
In what follows, we will denote by  $D$-CA the class of $D$-dimensional CA. The same notation  will also be used  for GCA.\\

\noindent
{\bf General CAs.}\\
\noindent
- Every non-trivial property of limit sets of   1-CA is undecidable~\cite{Kari94}.\\
- Surjectivity and injectivity are decidable for 1-CA~\cite{amoroso1972decision}  and undecidable for  2-CA~\cite{kari1994reversibility}.\\
- Topological entropy for  1-CA is  uncomputable~\cite{Hurd_Kari_Culik_1992}. \\
- Topological entropy  is  efficiently computable for  positively expansive CA~\cite{DamicoMM03}.
\\
- Sensitivity to the initial conditions and topological transitivity are undecidable for  1-CA (even when restricting to the case of reversible 1-CA)~\cite{Lukkarila10}.\\

\noindent
{\bf GCAs.}\\
-  Surjectivity and injectivity are  decidable for $D$-GCA with $D\geq 1$~\cite{BeaurK24}.\\
- Strongly transitive and positively expansive GCA do not exist unless the underlying group is abelian~\cite{nostro}.\\
- Sensitivity to the initial conditions and equicontinuity
are decidable for $D$-GCA with $D\geq 1$~\cite{BeaurK24}.\\
- Non-transitivity  is semi-decidable for $D$-GCA with $D\geq 1$~\cite{BeaurK24}.\\

\noindent
{\bf GCAs on abelian groups.}\\
\noindent
- Sensitivity to the initial conditions, equicontinuity, topological transitivity, ergodicity, positive expansivity, and denseness of periodic orbits are efficiently decidable for 1-GCA on abelian groups~\cite{
DennunzioFGM2020TCS,
Dennunzio20JCSS,
DennunzioFMMP19,
DennunzioFormentiMargara2023,
DBLP:journals/isci/DennunzioFM24,
kari2000}.\\ 

\noindent
{\bf GCA on $\ZZ/m\ZZ$.}\\
\noindent
- Topological entropy  is efficiently computable for  1-GCA~\cite{DamicoMM03}. \\
- Strong transitivity is  efficiently decidable  for $D$-GCA  with  $D\geq 1$~\cite{ManziniM99}.\\
- Lyapunov exponents are efficiently computable for 1-GCA~\cite{FinelliMM98}.

\section{Preliminaries} \label{cellular_a}
%=========================================================================
%=========================================================================
%=========================================================================

In this section, we review the fundamental definitions and key results related to CAs and GCAs. For additional definitions and results, we refer the reader to those introduced in~\cite{GCA24}.

\subsection{General CAs}
Let $G$ be a finite set. A CA \emph{configuration} is any function from $\ZZ$ to $G$, i.e., an element of $G^\ZZ$.  Given a configuration $c\in G^\ZZ$ and any integer $i\in\ZZ$,   $c_i$ denotes the value of $c$ at position $i$, while for any $i,j\in\ZZ$ with $i\leq j$,  $c_{[i,j]}$ denotes the word $c_i\cdots c_j\in G^{j-i+1}$ and $c_{[i,+\infty)}$ denotes the infinite word $c_i c_{i+1}\cdots$.
The set $G^\ZZ$ is also a topological space with the  prodiscrete topology, i.e., the product topology when each factor $G$ is given the discrete topology. 
\begin{comment}
%%%%%CILINDRI 
%%%%%
For any $i,j\in\ZZ$ with $i\leq j$ and any $u\in G^{j-i+1}$, the  cylinder  $C([i,j],u)$ 
is the subset of $G^\ZZ$ defined as $C([i,j],u):= \{c\in G^\ZZ: c_{[i,j]}=u\}$.
%
The cylinders form a clopen basis for the prodiscrete topology and, 
\textbf{SERVONO I CILINDRI??}
\end{comment}
When equipped with that topology,  $G^\ZZ$ turns out to be a compact, Hausdorff, and totally disconnected topological space. Moreover,  $G^\ZZ$ is a Polish space, i.e., a separable completely metrizable topological space. 
Indeed, the set $G^\ZZ$ can be equipped with the standard Tychonoff distance $d$ 
defined as 
$$
\forall c,c^\prime\in G^\ZZ:\  d(c,c')=\begin{cases}
0 & \text{ if }  c=c' \\
2^{
-
%\Delta(c,c') 
\min\{|j|: j\in\ZZ \text{ and } c_j\neq c'_j\}} &\text{otherwise}
\end{cases} 
$$
%where $\Delta(c,c')=\min\{|j|: j\in\ZZ \text{ and } c_j\neq c'_j\}$ 
and the topology induced by the Tychonoff distance coincides with the prodiscrete topology. Since it has no isolated points, the set $G^\ZZ$ is also a Cantor space. 

A \emph{CA} on (the alphabet) $G$ is any continuous function $\glorule: G^\ZZ \to G^\ZZ$ which is also shift commuting, i.e., $\glorule\circ \sigma=\sigma \circ \glorule$, where the shift map $\sigma: G^\ZZ\to G^\ZZ$ is defined as follows
 $$\forall c\in G^\ZZ\;\; \forall i\in \ZZ:\  \sigma(c)_i=c_{i+1}.
 $$

Any CA  can be equivalently defined by means of a  local rule $f:G^{2\rho +1} \to G$, where $\rho\in \N$~\cite{hedlund69}. Namely, a CA $\glorule$ with local rule  $f$ is defined as follows:
$$\forall c \in G^\ZZ\;\; \forall i\in \ZZ:\ \glorule(c)_i=f(c_{i-\rho} ,\dots, c_{i+\rho}).$$ 
The natural ${\rho}$, also denoted by $\rho(\glorule)$, will be said to be  the \emph{radius}  of the CA $\glorule$.

A CA $\glorule$ is said to be \emph{injective} (respectively, \emph{surjective}) if the map $\glorule$ is injective (respectively, surjective).
 We recall that injective CAs are also surjective, and, hence, bijective. Moreover, the inverse of any injective CA is a CA, too, and for this reason injective CAs are sometimes referred to as \emph{reversible}. 
 %A CA is surjective if and only if every configuration has a finite and uniformly bounded number of preimages~\cite{hedlund69}.
%A CA $\glorule$ is said to be  open if $\glorule$ is an open map with respect to the prodiscrete topology, i.e., if it maps open sets to open sets. \textbf{SERVE OPENNESS E NUMERO FINITO DI PREIMMAGINI??}
\smallskip

%We are now ready to define topological transitivity and expansivity.
We now recall the definitions of the dynamical properties that are investigated in this paper with particular emphasis on the first one. 
\begin{definition}\label{def:expansivity}
    An injective (reversible) CA  $\glorule$ is  \emph{expansive}  if 
there exists a constant $\epsilon > 0$  such that for any two configurations  $c, c'  \in G^\ZZ$, there exists an integer $n \in \ZZ$ such that
$
d\big(\glorule^n(c), \glorule^n(c')\big) > \epsilon.
$\\
%The value of the involved $\epsilon$ is called expansivity constant of the CA.
Similarly, a (possibly non injective)  CA $\glorule$ is  \emph{positively expansive} if the previous condition  holds but with the difference that the involved integer $n$ is required to be positive. 
\end{definition} 

\begin{definition}\label{def:transitivity}
    A CA  $\glorule$ is  \emph{topologically transitive}  if for any pair of nonempty open subsets $U,V\subseteq G^\ZZ$ there exists a natural $n\in \N$  such that $\glorule^n(U)\cap V\neq\emptyset$.
\end{definition} 
%
%\textbf{Mixing e altre mixing/ergodic properties}
%
\subsection{Background on Groups and Group CAs}
Let $G$ be a group with identity element $e$. Unless otherwise stated, the operation will be the multiplication. As usual, if the operation is commutative then the group is said to be \textit{abelian}.
A set \( H \subseteq G \)
is a \textit{subgroup} of \( G \), denoted by $H\leq G$, if $H$ forms a group with the same operation of \( G \). The set $\{e\}$ is called the \emph{trivial} group and is a subgroup of any group.   The \textit{centralizer} of a subset $S\subseteq G$ is the subgroup 
$
C_G(S) = \{ x \in G : xg = gx \; \text{ for all } g \in S\}
$.
%If $A$ and $B$ are subsets of the group $G$, $AB$ denotes the set $\{a\cdot b : a\in A;\,b\in B\}$. 

A subgroup \( N \) of \( G \) is said to be \textit{normal} (denoted by \( N \trianglelefteq G \)) if for all \( g \in G \) and \( n \in N \), it holds that \( g n g^{-1} \in N \), where $g^{-1}$ is the inverse element of $g$. Let \( N \) be a normal subgroup of \( G \). The \textit{quotient group} is the set 
$G/N = \{ gN : g \in G \}$, where each element \( gN = \{ g n : n \in N \}\) is called \emph{left coset} of \( N \) in \( G \) and it is also denoted by $[g]$. 
The operation on \( G/N \) is the coset multiplication:
$
(gN)(hN) = (gh)N 
$
for all $g, h \in G$.
% Since \( N \) is normal, the product of two cosets is well-defined. The identity element in \( G/N \) is the coset \( N \) (the coset containing the identity element of \( G \)), and the inverse of \( gN \) is \( g^{-1}N \).
The group $G$ is called \textit{simple group} if it is a nontrivial group and it has no proper nontrivial normal subgroups.
%while $G$ is said to be a \textit{quasi-simple group} if it is a perfect group such that \( G/\Z_G \) is a simple group.

Given two groups \( G \) and \( H \), their \textit{direct product} is the group \( G \times H \) consisting of ordered pairs \( (g, h) \), where \( g \in G \) and \( h \in H \), with the group operation defined component-wise:
$
(g_1, h_1) \cdot (g_2, h_2) = (g_1 \cdot g_2, h_1 \cdot h_2). 
$ 
The identity element of \( G \times H \) is \( (e_G, e_H) \), where \( e_G \) and \( e_H \) are the identity elements of \( G \) and \( H \), respectively.

A \textit{homomorphism} between two groups \( G \) and \( H \) is a map \( \varphi: G \to H \) such that for all \( a, b \in G \),
$
\varphi(a \cdot b) = \varphi(a) \cdot \varphi(b).
$ The set of all homomorphisms from  \( G \) to \(H\) is  denoted \( \operatorname{Hom}(G,H) \).  
A bijective homomorphism \( \varphi: G \to H \) is called \textit{isomorphism} and \( G \) and \( H \) are said to be \textit{isomorphic} (denoted \( G \cong H \)). 
An \textit{endomorphism} 
%(resp., \textit{automorphism}) 
is a homomorphism 
%(resp., isomorphism) 
from a group to itself and 
$\End(G)$ 
%and $Aut(G)$ stand for the sets of 
stands for the set of all the endomorphisms 
%and automorphisms 
of a group $G$.
%, respectively. 
%The set \( \operatorname{Aut}(G) \) forms a group under composition of functions.

 %This is equivalent to saying that \( g N g^{-1} = N \) for all \( g \in G \).

The \textit{kernel} of a homomorphism \( \varphi: G \to H \) is the set 
%\[
$
Ker(\varphi) = \{ g \in G : \varphi(g) = e_H \},
$
%\]
where \( e_H \) is the identity element of \( H \). The kernel is a normal subgroup of \( G \). The homomorphism $\varphi$ is said to be \emph{trivial} iff $Ker(\varphi)=G$. As usual, $\Imma(h)$ denotes the image of $h$. 

%A subgroup $H$ of a group $G$ is \textit{characteristic} if $\varphi(H)\leq H$ for every $\varphi\in Aut(G)$.
%As an example, the center of a group is a characteristic subgroup. 
%A stronger property is  fully invariance. 
A \textit{fully invariant} subgroup of a group $G$ is a subgroup $H$ of $G$ such that, for every $\phi \in \End(G)$, $\phi(H)\leq H$.
%\begin{comment}
In other terms, the restriction of an endomorphism of $G$ to a fully invariant subgroup $H$ induces an endomorphism on $H$.
%\end{comment}
We recall that a fully invariant subgroup is normal. 
%and the commutator subgroup of any group is a fully invariant subgroup.
A group $G$ is called invariantly simple if its only fully invariant 
subgroups are the trivial subgroup and $G$ itself.  
It is known that every invariantly simple finite group is a direct product of isomorphic simple groups~\cite{nostro}.

\smallskip

Let $G$ be now a \emph{finite} group with identity element $e$. The set $G^\ZZ$ is also a group, with the component-wise operation defined by the group operation of $G$, and we denote by $e^\ZZ$ the configuration taking the value $e$ at every integer position, i.e., $e^\ZZ$ is the identity element of the group $G^\ZZ$. Clearly, when equipped with the prodiscrete topology, $G^\ZZ$ turns out to be both a profinite and Polish group. A configuration $c\in G^\ZZ$ is said to be \emph{finite} if the number of positions $i\in\ZZ$ such that $c_i\neq e$ is finite.

If $H\leq G$, then $H^\ZZ$ is a closed subgroup of $G^\ZZ$. Moreover, $H\trianglelefteq G$ if and only if $H^\ZZ\trianglelefteq G^\ZZ$.
In this case, the prodiscrete topologies on $H^\ZZ$ and $(G/H)^\ZZ$ agree with the subspace topology on $H^\ZZ$ and with the quotient topology on $(G/H)^\ZZ$, respectively.  

A CA $\glorule: G^\ZZ\to G^\ZZ$ is said to be a \emph{group CA (GCA)} if $\glorule$ is an endomorphism of $G^\ZZ$. 
In that case, by~\cite[Lemma 3]{SaloT12}, the local rule of $\glorule$ is a homomorphism $f:G^{2\rho+1} \to G$.  %(see~\cite{CR22} for a proof as far as an arbitrary algebraic structure is concerned).
%The  kernel of a GCA  $\glorule$ is $Ker(\glorule)=\{c\in G^\ZZ: \glorule(c)=e^\ZZ\}$. \textbf{SERVE??}
Furthermore, given any function $f:G^{2\rho+1}\to G$, by~\cite[Lemma 8]{SaloT14} (see also the later~\cite[Thm. 1]{GCA24}), it holds that $f\in \Hom(G^{2\rho+1},G)$ if and only if there exist $2\rho+1$ endomorphisms $h_{-\rho},\ldots,h_{\rho}\in \End(G)$, such that $f(g_{-\rho},\ldots,g_{\rho})=h_{-\rho}(g_{-\rho}) \cdots h_{\rho}(g_{\rho})$ for all $g_{-\rho},\ldots,g_{\rho}\in G$ and 
$\Imma(h_i)\subseteq C_G(\Imma(h_j))$ for every $i\neq j$.
In this case, we will write $f=(h_{-\rho},\ldots,h_{\rho})$. Notice that some of the $h_i$'s could be trivial but we will always assume, if not otherwise stated, that at least one between $h_{-\rho}$ and $h_{\rho}$ is non-trivial. 
%and the natural ${\rho}$, also denoted by $\rho(\glorule)$, will be said to be  the \emph{radius}  of the GCA $\glorule$.  

A GCA with local rule $f$  is called  shift-like  if $\rho\geq 1$ and only one between the $h_i$'s is non trivial, while  it is called  identity-like  if $\rho=0$. Surjective shift-like GCA are topologically transitive. 
%
%\textbf{DIRE CHE PER I GCA top trans=mixing=...}
%
\section{Expansivity on abelian groups}
%========================================================
%========================================================
%========================================================
\label{expabel}

%Let $G$ be a finite group, and let $e$ denote its identity element.
%Let $\glorule$ be a GCA on $G$ and $c\in G^\ZZ$ be any configuration. 
We define the following two subsets of $G^\ZZ$ for any finite group $G$.
\begin{eqnarray*}
    L(k) &=&\{ c\in G^\ZZ:\ c_k\neq e \text{ and }  c_i=e \text{ for } i>k \}\\
    R(k) &=&\{ c\in G^\ZZ:\ c_k\neq e \text{ and }  c_i=e \text{ for } i<k \} 
\end{eqnarray*}

Sometimes we will refer to elements of $L(k)$ as  left fronts and to  elements of $R(k)$ as right fronts.   
The position of an element $c$ in $L(k)$ or $R(k)$, denoted by $pos(c)$, is $k$.  
A potential ambiguity in the notation could arise when referring to the position of a configuration that belongs to both $L(k)$ and $R(k')$ (i.e., a finite configuration). Since this situation never occurs, we will avoid overloading the notation by introducing separate left and right positions.
 
The following remark is an immediate consequence of the definition of expansivity and 
the additivity of $\glorule$.
\begin{comment}
%%%%%%%%%%%%%%%%%%
\begin{observation}\label{exp_vs_front}
Let  $\glorule$ be any injective GCA on a group $G$. It holds that 
$\glorule$ is expansive if and only if 
\begin{eqnarray}
 &&   \forall k \in \ZZ \;\; \forall c \in L(k) \;\; 
\exists n \in \ZZ :\ 
pos\bigl(\glorule^{n}(c)\bigr) > k \text { and } \label{eq001}\\
 &&   \forall k \in \ZZ \;\; \forall c \in R(k) \;\; 
\exists n \in \ZZ :\ 
pos\bigl(\glorule^{n}(c)\bigr) < k. \label{eq002}
\end{eqnarray}
\end{observation}
\textbf{ALTERNATIVA}
%%%%%%%%%%%%%
\end{comment}
\begin{observation}\label{exp_vs_front}
Let  $\glorule$ be any injective GCA on a finite group $G$ and let $k\in\ZZ$. It holds that  
$\glorule$ is expansive if and only if 
\begin{eqnarray}
 &&    \forall c \in L(k) \;\; 
\exists n \in \ZZ :\ 
pos\bigl(\glorule^{n}(c)\bigr) > k \text { and } \label{eq001}\\
 &&   \forall c \in R(k) \;\; 
\exists n \in \ZZ :\ 
pos\bigl(\glorule^{n}(c)\bigr) < k. \label{eq002}
\end{eqnarray}
\end{observation}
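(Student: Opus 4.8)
The plan is to turn expansivity into a single‑orbit statement by using additivity, and then read the two front conditions off directly. First I would use that $\glorule$ is an endomorphism: for all $c,c'$ and all $n$ one has $\glorule^{n}(c)\,\glorule^{n}(c')^{-1}=\glorule^{n}\big(c\,(c')^{-1}\big)$, and since two configurations differ exactly where their quotient is non‑identity, $d\big(\glorule^{n}(c),\glorule^{n}(c')\big)=d\big(\glorule^{n}(w),e^\ZZ\big)$ with $w=c\,(c')^{-1}$. As $(c,c')$ ranges over distinct pairs, $w$ ranges over $G^\ZZ\setminus\{e^\ZZ\}$, so $\glorule$ is expansive iff there is $\epsilon>0$ such that every $w\neq e^\ZZ$ admits some $n\in\ZZ$ with $d(\glorule^{n}(w),e^\ZZ)>\epsilon$; writing $\epsilon=2^{-s}$, this says exactly that $\glorule^{n}(w)$ carries a non‑identity symbol at some position in the central window $(-s,s)$.

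Next I would reformulate the front conditions as the absence of \emph{stuck} fronts. Condition~\eqref{eq001} fails for some $c\in L(k)$ precisely when $pos(\glorule^{n}(c))\le k$ for all $n$, i.e. $\glorule^{n}(c)$ is identity on $(k,+\infty)$ for every $n$ (a stuck left front); dually, condition~\eqref{eq002} fails precisely when some $c\in R(k)$ has $\glorule^{n}(c)$ identity on $(-\infty,k)$ for all $n$ (a stuck right front). Since $\glorule$ commutes with $\sigma$ and stuckness is shift invariant, existence of such a front for the fixed $k$ is equivalent to its existence for any $k$, so the statement becomes: $\glorule$ is expansive iff it admits neither a stuck left front nor a stuck right front.

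The easy direction, that a stuck front forces non‑expansivity, I would prove by translation. If $c$ is a stuck left front with $pos(c)=k$, every iterate keeps its support inside $(-\infty,k]$, hence $\glorule^{n}(\sigma^{m}(c))$ keeps its support inside $(-\infty,k-m]$ for all $n$; taking $m$ large pushes the whole orbit to the left of any prescribed window, giving for each $s$ a nonzero configuration whose entire orbit avoids $(-s,s)$, so no single $\epsilon$ can work. The stuck right front case is symmetric. For the converse I would use compactness: each $L(k)$ and $R(k)$ is closed, hence compact, and the absence of stuck fronts covers $L(0)$ by the open sets $\{c:\exists j>0,\ \glorule^{n}(c)_j\neq e\}$ (and $R(0)$ symmetrically), so finitely many suffice. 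Since a front moves by at most $\max\big(\rho(\glorule),\rho(\glorule^{-1})\big)$ per step, this yields uniform constants $T,M$ with which every left front develops a defect in $(0,M]$ and every right front one in $[-M,0)$ within time $T$; iterating, the rightmost defect of any right‑bounded configuration is driven to $+\infty$ and the leftmost defect of any left‑bounded one to $-\infty$, so by bounded speed each must sweep through a fixed window. Splitting an arbitrary $w\neq e^\ZZ$ into its part on $[0,+\infty)$ and its part on $(-\infty,0)$ and applying these two sweeps then brings a defect of $w$ into the window, delivering the uniform $\epsilon$.

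The main obstacle is exactly this final splitting step: a defect arriving from the right part and one arriving from the left part could reach the central window at the same instant and cancel, so the two one‑sided conclusions cannot simply be superposed. Overcoming it requires decoupling the two sweeps — choosing the time (or the splitting point) so that when the leftmost defect of the right part crosses the window the complementary piece has already been carried away from that position. Equivalently, in the contrapositive one extracts from a sequence of window‑avoiding witnesses $w_s$ a limit configuration that is genuinely one‑sided bounded and stuck; the delicate point is to make this limit simultaneously nonzero, front‑shaped, and stuck for \emph{all} times rather than only at time zero, which I would control using the uniform bounds $T,M$ together with the compactness of $G^\ZZ$.
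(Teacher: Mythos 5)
Your reduction of expansivity to the single-orbit formulation via additivity, and your proof of the easy direction (a stuck front, pushed arbitrarily far away by powers of $\sigma$, defeats every candidate $\epsilon$), are both correct; note the paper offers no proof at all here, declaring the equivalence immediate. The genuine gap is in the converse, exactly at the point you flag yourself. For a witness $w\neq e^\ZZ$ all of whose iterates avoid a central window and which has defects on both sides, your plan is to split $w=w_L\,w_R$ and drive the leftmost defect of $\glorule^{n}(w_R)$ into the window. But the number of steps needed to do that is of the order of $pos(w_R)$, which is unbounded over the witnesses you must handle, and during $|n|$ steps the support of $\glorule^{n}(w_L)$ can spread rightwards by $|n|\max\bigl(\rho(\glorule),\rho(\glorule^{-1})\bigr)$; no choice of splitting point or stopping time prevents it from overrunning the window and cancelling the incoming defect. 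The contrapositive variant you sketch has the same defect in another guise: the right front extracted from a window-avoiding witness sits at an uncontrolled distance beyond the window, so after normalising its position to the fixed $k$ the ``stuck for all times'' property is lost. Neither of the two repairs you gesture at is actually carried out, and as described neither closes the argument.

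The step can be closed, but by a different observation: the two halves never interact at all. Take $s=\max\bigl(\rho(\glorule),\rho(\glorule^{-1})\bigr)$ and suppose every $\glorule^{n}(w)$ is the identity on $[-s,s]$. Splitting $w=w_L\,w_R$ at position $0$, the halves are supported in $(-\infty,-s-1]$ and $[s+1,\infty)$ respectively, and one shows by induction on $|n|$ (in both time directions) that this persists forever: in one step each support creeps toward the origin by at most $s$, so the two supports remain disjoint, the defect set of $\glorule^{n}(w)$ is their disjoint union with no cancellation, and since that union avoids $[-s,s]$ each half is pushed back into its original half-line. Hence if $w_R\neq e^\ZZ$ it is a right front with $pos(\glorule^{n}(w_R))>s$ for all $n\in\ZZ$, whereas iterating condition~\eqref{eq002} (transported to every $k$ by shift-commutation) forces $\inf_{n\in\ZZ} pos(\glorule^{n}(w_R))=-\infty$; if $w_R=e^\ZZ$ argue symmetrically with $w_L$ and condition~\eqref{eq001}. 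This yields the uniform constant $\epsilon=2^{-s-1}$ directly and makes your compactness step superfluous for this statement (that step is the content of the paper's Lemma~\ref{lyapunov}, not of this observation).
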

A counterpart of the equivalence stated in  Remark~\ref{exp_vs_front} holds as far as positive expansivity and any (possibly non injective) GCA $\glorule$ are concerned, with the difference that the integer $n$ involved in both conditions~\eqref{eq001} and~\eqref{eq002} is required to be positive. 

Since $G^\ZZ$ is a compact space, we are able to prove the following Lemma.
\begin{comment}
 \begin{lemma}\label{lyapunov}
Let  $\glorule$ be any injective GCA on a group $G$.  It holds that 
$\glorule$ is expansive if and only if there exists a $k_{\glorule} > 0$ such that:
\begin{eqnarray}
 &&   \forall k \in \ZZ \;\; \forall c \in L(k) \;\; 
\exists n \in \{-k_{\glorule}, \dots , k_{\glorule}\} :\ 
pos\bigl(\glorule^{n}(c)\bigr) > k \text { and } \label{eq1}\\
 &&   \forall k \in \ZZ \;\; \forall c \in R(k) \;\; 
\exists n \in \{-k_{\glorule}, \dots , k_{\glorule}\} :\ 
pos\bigl(\glorule^{n}(c)\bigr) < k. \label{eq2}
\end{eqnarray}
\end{lemma}
\textbf{ALTERNATIVA}
\end{comment}
\begin{lemma}\label{lyapunov}
Let  $\glorule$ be any injective GCA on a finite group $G$ and let $k\in\ZZ$. It holds that 
$\glorule$ is expansive if and only if there exists a natural $k_{\glorule} > 0$ such that:
\begin{eqnarray}
 &&    \forall c \in L(k) \;\; 
\exists n \in \{-k_{\glorule}, \dots , k_{\glorule}\} :\ 
pos\bigl(\glorule^{n}(c)\bigr) > k \text { and } \label{eq1}\\
 &&    \forall c \in R(k) \;\; 
\exists n \in \{-k_{\glorule}, \dots , k_{\glorule}\} :\ 
pos\bigl(\glorule^{n}(c)\bigr) < k. \label{eq2}
\end{eqnarray}
\end{lemma}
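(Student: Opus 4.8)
The easy direction is ($\Leftarrow$): if such a $k_{\glorule}$ exists, then conditions~\eqref{eq1} and~\eqref{eq2} immediately imply the weaker conditions~\eqref{eq001} and~\eqref{eq002} of Remark~\ref{exp_vs_front} (we simply drop the bound on $n$), so $\glorule$ is expansive. Hence all the content lies in the forward direction, where the plan is to upgrade the pointwise quantifier ``$\exists n \in \ZZ$'' to a uniform bounded window by extracting the uniformity from the compactness of $G^\ZZ$.

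First I would record that $L(k)$ and $R(k)$ are closed subsets of the compact space $G^\ZZ$, hence compact: the condition $c_i=e$ for all $i>k$ (resp.\ $i<k$) defines a closed set as an intersection of clopen sets, while $c_k\neq e$ is itself clopen since $G$ carries the discrete topology. Next, for a left front $c\in L(k)$ and any $n\in\ZZ$ I would note that $\glorule^n(c)$ is again a left front: as $\glorule^n$ is a CA of some finite radius $r_n$ (controlled by $\rho(\glorule)$ for $n\geq 0$ and by the radius of the inverse CA $\glorule^{-1}$ for $n<0$) and $c$ vanishes to the right of $k$, the configuration $\glorule^n(c)$ vanishes to the right of $k+r_n$, while injectivity of $\glorule$ forces $\glorule^n(c)\neq e^\ZZ$. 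Thus $pos(\glorule^n(c))$ is well defined, and $pos(\glorule^n(c))>k$ is equivalent to $\glorule^n(c)$ having a non-identity coordinate somewhere in the finite window $\{k+1,\dots,k+r_n\}$.

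The crucial point is that this last event is clopen as a function of $c$: because $\glorule^n$ has finite radius, each coordinate $\glorule^n(c)_j$ depends only on finitely many coordinates of $c$, so whether $pos(\glorule^n(c))>k$ depends only on the restriction of $c$ to a finite window and is therefore locally constant. For each $N\in\N$ I would then set
$$
U_N=\{\, c\in L(k):\ \exists\, n\in\{-N,\dots,N\}\ \text{with}\ pos(\glorule^n(c))>k \,\},
$$
a finite union of events each factoring through finitely many coordinates, hence clopen and in particular open in $L(k)$; moreover the family is increasing in $N$. By Remark~\ref{exp_vs_front}, expansivity guarantees that every $c\in L(k)$ lies in some $U_N$, so $\{U_N\}_{N\in\N}$ is an open cover of the compact set $L(k)$; a finite subcover, together with monotonicity, yields a single index $N_L$ with $L(k)=U_{N_L}$. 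The symmetric argument on the compact set $R(k)$ produces an index $N_R$, and $k_{\glorule}=\max\{N_L,N_R\}$ gives exactly~\eqref{eq1} and~\eqref{eq2}.

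The main obstacle is precisely the local-constancy (clopenness) claim for the sets $U_N$; everything else is routine compactness packaging. I would therefore make the radius $r_n$ of $\glorule^n$ explicit so that both the target window $\{k+1,\dots,k+r_n\}$ and the finite set of coordinates of $c$ on which the event depends are pinned down, after which clopenness is immediate. Finally, I would remark that $k_{\glorule}$ may be taken independent of $k$: by shift-commutation the argument run at $k=0$ transports verbatim to every $k\in\ZZ$, which justifies the notation $k_{\glorule}$ rather than a bound depending on $k$.
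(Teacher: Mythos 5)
Your proof is correct and rests on the same two ingredients as the paper's own argument: compactness of the configuration space and the fact that $\glorule^{\pm n}$ has finite radius, so that the event $pos(\glorule^{n}(c))>k$ depends on only finitely many coordinates of $c\in L(k)$. The only difference is packaging: you run a direct open-cover argument on the compact set $L(k)$ using the increasing clopen sets $U_N$, whereas the paper argues by contradiction, extracting a convergent subsequence of left fronts with strictly increasing escape times and invoking the same locality fact to show the limit configuration would violate expansivity.
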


\begin{proof}
The ``only if''   implication is trivial. We now prove the ``if'' implication.\\
As we will see, the existence of this value $k_{\glorule}$ follows from the compactness of $G^\ZZ$.

For the sake of contradiction,  suppose that for every $k_{\glorule}>0$ condition~\eqref{eq1} is false (the same reasoning applies assuming condition~\eqref{eq2} to be false).
Let $c\in L(k)$ and let 
$$m(c)=\min\{i\in\N:\ \max\{ pos(\glorule^{-i}(c)),\dots, pos(\glorule^{i}(c)) \}>k\}$$
or, equivalently,
$$m(c)=\min\{i\in\N:\ \max\{ pos(\glorule^{-i}(c)), pos(\glorule^{i}(c)) \}>k\}.$$

Since $\glorule$ is expansive, for every $c\in L(k)$ we have $m(c)< \infty$.
Since $\eqref{eq1}$ is false for every $k_{\glorule}>0$, we can find a  sequence $s=(c^{(i)})_{i\in \N}$ of configurations in $L(k)$  such that
for every $i \geq 1$ we have $m(c^{(i+1)}) > m(c^{(i)})$.
Since $G^\ZZ$ is a compact space, we can extract from $s$ a subsequence  $(c^{(i_j)})_{j\in \N}$ converging to some $c$.  Since for every $j\in \N$, $c^{(i_j)} \in L(k)$, $c$ will belong to $L(k)$ as well.
We now prove that $m(c)$ cannot be finite and then $\glorule$ is not expansive.
By the definition of $m(c)$,  either
$pos(\glorule^{m(c)}(c))>k$  or
$pos(\glorule^{-m(c)}(c))>k$. 
Without loss of generality assume that $pos(\glorule^{m(c)}(c))>k$.

Since $c^{(i_j)}$ converges to $c$ and $m(c^{(i_j)})$ is a strictly increasing sequence, for every $h\in\N$  
we can always pick 
an element $\overline{c}$ in the sequence  $(c^{(i_j)})_{j\in \N}$ such that
\begin{equation*} 
    m(\overline{c})> m(c) \text{  and  } \overline{c}_{[k-h,k]}=c_{[k-h,k]}
\end{equation*}
that leads to 
a contradiction.
\end{proof}
\begin{comment}
We proved  Lemma~\ref{lyapunov} for GCAs, but it is not hard to  extend it to general CAs. 
Also note that, since CAs are shift-commuting maps, the initial front positions $k$ in both Observation~\ref{exp_vs_front} and Lemma~\ref{lyapunov} could be taken to be $0$.
\end{comment}
\begin{definition}\label{angular}
Let $\glorule$ be an injective GCA on the group $G$ which also expansive, and let $k_{\glorule}$ denote the constant introduced in Lemma~\ref{lyapunov}. An element $c\in L(k)$ is said to be a \emph{lower angular configuration} for $\glorule$ if  
$$\forall i \in \{0,\dots,k_\glorule \}\;\; \forall j>k:\   \glorule^{-i}(c)_j =e.$$
Analogously, $c\in L(k)$ is an \emph{upper angular configuration} for $\glorule$ if  
$$\forall i \in \{0,\dots,k_\glorule \}\;\; \forall j>k:\   \glorule^{i}(c)_j =e.$$
See Figure~\ref{fig:LRangular} for a graphical representation of a lower and an upper angular configuration. 
\end{definition}
Definition~\ref{angular} can also be extended to configurations 
$c\in R(k)$, following the same approach.

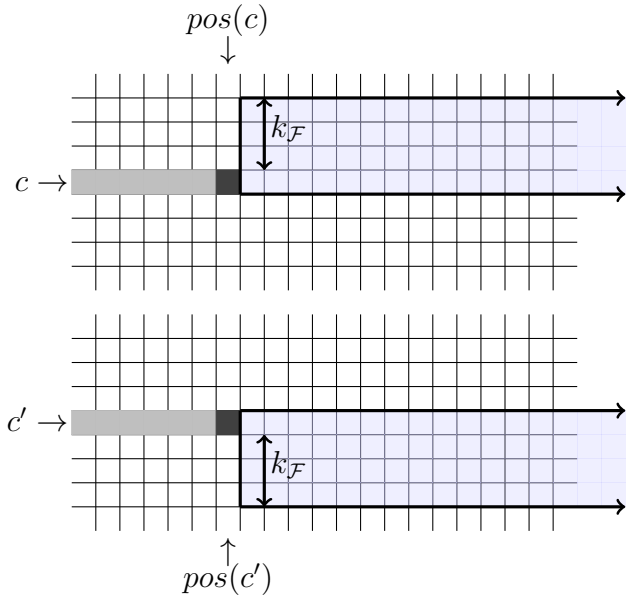
\begin{figure}[ht]
    \centering
\begin{tikzpicture}[scale=0.8]
\definecolor{lightred}{rgb}{1,0.7,0.7}

  % --- griglia 1---
 \def\orx{0}
 \def\ory{0}
 \def\width{.4}
 \def\high{.4}
\def\righe{8}
\def\colonne{20}
\pgfmathsetmacro{\altezza}{\high*(\righe+1)}
\pgfmathsetmacro{\larghezza}{\width*(\colonne+1)}
\foreach \y in {1,...,\righe} {
    \draw (\orx,\ory+\y*\high) -- (\orx+\larghezza,\ory+\y*\high);
  }
\foreach \x in {1,...,\colonne} {
     \draw (\orx+\x*\width,\ory) -- (\orx+\x*\width,\ory+\altezza);
  }

% quadrati grigi
\def\casellay{4}
\pgfmathsetmacro{\cy}{\ory+\casellay*\high}
\def\colore{lightgray}
\foreach \casellax in {0,...,5} {
\pgfmathsetmacro{\cx}{\orx+\casellax*\width}
\filldraw[fill=lightgray, draw=lightgray] (\cx,\cy) rectangle (\cx+\width,\cy+\high);
  }

  % ZERI
\definecolor{lightlightgray}{gray}{0.95} % 0=nero, 1=bianco
\pgfmathsetmacro{\cy}{\ory+\casellay*\high}
\def\colore{lightgray}
\foreach \casellax in {7,...,22} {
\foreach \casellay in {4,...,7} {
\pgfmathsetmacro{\cx}{\orx+\casellax*\width}
\pgfmathsetmacro{\cy}{\ory+\casellay*\high}
\filldraw[fill=blue!20, draw=none, opacity=0.3] (\cx,\cy) rectangle (\cx+\width,\cy+\high);
  }
}

  % quadrato pos
\pgfmathsetmacro{\cx}{\orx+6*\width}
\filldraw[fill=darkgray, draw=darkgray] (\cx,\cy) rectangle (\cx+\width,\cy+\high);

% frecce orizzontali e riga tich verticale
\def\spessore{very thick}
\draw[->,black,\spessore] (\orx+7*\width,\ory+4*\high) -- (\orx+23*\width,\ory+4*\high);
\draw[->,black,\spessore] (\orx+7*\width,\ory+8*\high) -- (\orx+23*\width,\ory+8*\high);

\draw[black,\spessore] (\orx+7*\width,\ory+4*\high) -- (\orx+7*\width,\ory+8*\high);

% scritte
 \node at (-0.5,1.36+\high) {$c \to$};
 \node at (3.0-\width*1,+4) {$\downarrow$};
 \node at (3.0-\width*1,4.5) {$pos(c) $};
% \node at (3.0+\width*18,2.0) {$\leftarrow zeros $};
 \node at (\orx+\width*9,2.7) {$k_\glorule $};
 \draw[<->,black,\spessore] (\orx+8.0*\width,\ory+5*\high) -- (\orx+8.0*\width,\ory+8*\high);

  % --- griglia 1---
 \def\orx{0}
 \def\ory{-4}
 \def\width{.4}
 \def\high{.4}
\def\righe{8}
\def\colonne{20}
\pgfmathsetmacro{\altezza}{\high*(\righe+1)}
\pgfmathsetmacro{\larghezza}{\width*(\colonne+1)}
\foreach \y in {1,...,\righe} {
    \draw (\orx,\ory+\y*\high) -- (\orx+\larghezza,\ory+\y*\high);
  }
\foreach \x in {1,...,\colonne} {
     \draw (\orx+\x*\width,\ory) -- (\orx+\x*\width,\ory+\altezza);
  }

% quadrati grigi
\def\casellay{4}
\pgfmathsetmacro{\cy}{\ory+\casellay*\high}
\def\colore{lightgray}
\foreach \casellax in {0,...,5} {
\pgfmathsetmacro{\cx}{\orx+\casellax*\width}
\filldraw[fill=lightgray, draw=lightgray] (\cx,\cy) rectangle (\cx+\width,\cy+\high);
  }

  % ZERI
\definecolor{lightlightgray}{gray}{0.95} % 0=nero, 1=bianco
\pgfmathsetmacro{\cy}{\ory+\casellay*\high}
\def\colore{lightgray}
\foreach \casellax in {7,...,22} {
\foreach \casellay in {1,...,4} {
\pgfmathsetmacro{\cx}{\orx+\casellax*\width}
\pgfmathsetmacro{\cy}{\ory+\casellay*\high}
\filldraw[fill=blue!20, draw=none, opacity=0.3] (\cx,\cy) rectangle (\cx+\width,\cy+\high);
  }
}

  % quadrato pos
\pgfmathsetmacro{\cx}{\orx+6*\width}
\filldraw[fill=darkgray, draw=darkgray] (\cx,\cy) rectangle (\cx+\width,\cy+\high);

% frecce orizzontali e riga tich verticale
\def\spessore{very thick}
\draw[->,black,\spessore] (\orx+7*\width,\ory+1*\high) -- (\orx+23*\width,\ory+1*\high);
\draw[->,black,\spessore] (\orx+7*\width,\ory+5*\high) -- (\orx+23*\width,\ory+5*\high);

\draw[black,\spessore] (\orx+7*\width,\ory+1*\high) -- (\orx+7*\width,\ory+5*\high);

% scritte
 \node at (-0.57,-4+1.46+\high) {$c' \to$};
 
  \node at (3.0-\width*1,-4.35) {$\uparrow$};
 \node at (3.0-\width*1,-4.8) {$pos(c') $};

% \node at (3.0+\width*18,2.0) {$\leftarrow zeros $};
 \node at (\orx+\width*9,-4+1.1) {$k_\glorule $};
 \draw[<->,black,\spessore] (\orx+8.0*\width,\ory+4*\high) -- (\orx+8.0*\width,\ory+1*\high);

\end{tikzpicture}
 \caption{$c$ is a lower angular configuration and $c'$ is an upper angular configuration. Both are depicted in gray, except for the rightmost element, which is black. Black elements must be different from zero and determine the positions of $c$ and $c'$. All light blue elements must be equal to $e$.
}
    \label{fig:LRangular}
\end{figure}
By the definition of expansivity, if an injective GCA $\glorule$ is expansive, then the position of any left front increases without bound under repeated iterations of either $\glorule$ or $\glorule^{-1}$.  
In the following lemma, we prove that the position of any lower angular configuration increases without bound under repeated iterations of $\glorule$, and that the position of any upper angular configuration increases without bound under repeated iterations of $\glorule^{-1}$.
An analogous result also holds in the case of right fronts.

\begin{lemma}\label{l1}
Let $\glorule$ be an injective GCA on a group $G$. If $\glorule$ is expansive  then the two following properties hold.\\
$(1)$ Let $c\in L(k)$ be an upper angular configuration. Then
$$\forall i<0 \;\; \exists j \in \{ 1,\dots, k_\glorule\}:\ pos(\glorule^{i-j}(c))> pos(\glorule^{i}(c))$$
$(2)$ Let $c\in L(k)$ be a lower angular configuration. Then
$$\forall i>0 \;\; \exists j \in \{ 1,\dots, k_\glorule\}:\ pos(\glorule^{i+j}(c))> pos(\glorule^{i}(c))$$
\end{lemma}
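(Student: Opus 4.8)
The plan is to recast everything in terms of the integer sequence $p_n := pos(\glorule^n(c))$, $n\in\ZZ$. First I would observe that each iterate $\glorule^n(c)$ is again a left front: its rightmost non-identity coordinate is well defined and finite because $\glorule$ and $\glorule^{-1}$ have finite radius and $\glorule^n(c)\neq e^\ZZ$ by injectivity, so $\glorule^n(c)\in L(p_n)$ for every $n$. Applying Lemma~\ref{lyapunov} to the left front $\glorule^n(c)$ then supplies the one uniform ingredient I will use, the window-increase property
\[
(\star)\qquad \forall n\in\ZZ\ \ \exists m\in\{-k_\glorule,\dots,k_\glorule\}:\ p_{n+m}>p_n .
\]
The second observation is that the conclusions $(1)$ and $(2)$ are precisely the \emph{failure} of the opposite angular condition. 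Unwinding Definition~\ref{angular}, the iterate $\glorule^i(c)$ is lower angular exactly when $p_{i-s}\le p_i$ for all $s\in\{0,\dots,k_\glorule\}$, and upper angular exactly when $p_{i+s}\le p_i$ for all such $s$. Hence the conclusion of $(1)$ at an index $i<0$ says ``$\glorule^i(c)$ is not lower angular'', and the conclusion of $(2)$ at an index $i>0$ says ``$\glorule^i(c)$ is not upper angular''. I will therefore argue by contradiction, assuming some iterate is of the wrong angular type.

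For $(2)$ I would fix $i>0$ and suppose, for contradiction, that $\glorule^i(c)$ is upper angular, while $c=\glorule^0(c)$ is lower angular by hypothesis. This gives two ``angular anchors'': $p_{-s}\le p_0$ and $p_{i+s}\le p_i$ for all $s\in\{0,\dots,k_\glorule\}$. Now I choose $n^\ast$ maximising $p_n$ over the finite interval $n\in\{0,\dots,i\}$ and apply $(\star)$ at $n^\ast$, obtaining some $m\in\{n^\ast-k_\glorule,\dots,n^\ast+k_\glorule\}$ with $p_m>p_{n^\ast}$. Since $p_{n^\ast}$ is maximal on $\{0,\dots,i\}$, this $m$ lies outside that interval. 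If $m<0$ then $m\ge n^\ast-k_\glorule\ge -k_\glorule$, so the lower-angular anchor gives $p_m\le p_0\le p_{n^\ast}$; if $m>i$ then $m\le n^\ast+k_\glorule\le i+k_\glorule$, so the upper-angular anchor gives $p_m\le p_i\le p_{n^\ast}$. Either way $p_m\le p_{n^\ast}$, contradicting $p_m>p_{n^\ast}$. The proof of $(1)$ is the mirror image: assuming $\glorule^i(c)$ lower angular for some $i<0$, one maximises $p_n$ over $\{i,\dots,0\}$ and escapes the interval using the upper-angular anchor at $0$ (when $m>0$) or the lower-angular anchor at $i$ (when $m<i$).

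The step I expect to be the genuine obstacle is exactly what this extremal argument is designed to circumvent. The naive route---propagating the angular property one step at a time, say proving that $\glorule^{-1}(c)$ is again upper angular and inducting---breaks down because $pos$ need not be monotone along the orbit: a single forward or backward step can shrink the rightmost front whenever the relevant boundary endomorphism kills the leading coordinate, so one cannot compare $p_{i\pm1}$ with $p_i$ directly. Working on the whole finite interval between the two anchors and extracting a maximiser replaces this delicate local monotonicity by one application of $(\star)$, which only ever produces points within distance $k_\glorule$ of an endpoint and therefore always lands in the range controlled by one of the two angular conditions. The remaining points are routine: the well-definedness and finiteness of $p_n$, and the nonemptiness of $\{0,\dots,i\}$ (resp.\ $\{i,\dots,0\}$), which hold because $i\neq 0$.
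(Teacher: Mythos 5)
Your proof is correct, but it takes a genuinely different route from the paper's. The paper argues directly and recursively: for an upper angular $c\in L(k)$, Lemma~\ref{lyapunov} forces some iterate within $k_\glorule$ steps to exceed position $k$, and upper angularity forces that iterate to be a backward one; the \emph{first} backward iterate whose position exceeds $k$ is then shown to be upper angular again (its forward $k_\glorule$-window consists of iterates already known to have position at most $k$, either because they precede it in the backward orbit or because they lie in the upper-angular window of $c$), and the construction is repeated, producing a chain of ``records'' at mutual distance at most $k_\glorule$ that supplies the witness $j$ for every $i<0$. You instead observe that the conclusion at index $i$ is exactly the negation of ``$\glorule^i(c)$ is lower (resp.\ upper) angular'', assume a counterexample, maximize $pos$ over the finite interval between the two angular anchors, and apply Lemma~\ref{lyapunov} once at the maximizer; the escape point it produces must leave the interval, yet lies within $k_\glorule$ of one of its endpoints and is therefore blocked by the corresponding angular condition. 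Both arguments rest on the same two ingredients, which you rightly make explicit: every iterate of a left front is again a left front (finite radii of $\glorule$ and $\glorule^{-1}$ plus injectivity), and Lemma~\ref{lyapunov} with a constant $k_\glorule$ uniform in $k$ (shift-commutation). Your extremal formulation dispenses with the paper's small but nontrivial verification that the first record-breaker inherits upper angularity; the paper's version, in exchange, locates the position increase explicitly at the next record, which is the form in which the lemma is later invoked in the proof of Theorem~\ref{su_p}.
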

\begin{proof}
We prove only Property $(1)$, since the proof of Property $(2)$ is analogous.
Let $c\in L(k)$ be an upper angular configuration.
By Lemma~\ref{lyapunov} we know that there exists $l\in \{-k_\glorule,\dots k_\glorule \}$  such that $pos(\glorule^l(c)) >k$.  
Since $c$ is an upper angular configuration, $l$ must be negative.

Consider now the ordered set
$A=\{\glorule^{-1}(c),\dots,\glorule^{-k_\glorule}(c)\}$. 
Let $x\in A$   be the first element such that 
 $pos(x)>k$. Since $x$ is the first element of $A$ for which 
 $pos(x)>k$, then $x$ is an upper angular configuration as well.
 Therefore, the theorem can be applied again starting from $c=x$.
 
All elements $y \in A$ that precede $x$ will satisfy $pos(y)\leq k < 
pos(x)$, and therefore the statement is also proved for them.
\end{proof}

In the following lemma, 
we prove that for an injective $\glorule$ which is also expansive,  if the position of a left front $c$ increases, after $n$ iterations of $\glorule^{-1}$, 
by more than a constant threshold (independent of $c$), then $\glorule^{-n}(c)$ cannot be 
a lower angular configuration.
Similarly, if the position of a left front $c$ increases, after $n$ iterations of $\glorule$, 
by more than a constant threshold (independent of $c$), then $\glorule^{n}(c)$ cannot be 
an upper angular configuration.

\begin{lemma}\label{noangular}
Let $\glorule$ be an injective GCA on $G$ which is also expansive.
Let $c\in L(k)$ and  $n\in \N$. If $pos(\glorule^{-n}(c)) > k+k_\glorule\,\rho(\glorule^{-1})$ then
$\glorule^{-n}(c)$ cannot be a lower angular configuration.
\end{lemma}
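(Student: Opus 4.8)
The plan is to argue by strong induction on $n$. Write $c' = \glorule^{-n}(c)$ and suppose, towards a contradiction, that $c'$ lies in some $L(k')$ with $k' = pos(c') > k + k_\glorule\,\rho(\glorule^{-1})$ and that $c'$ is a lower angular configuration. The whole argument rests on one elementary bound: since $\glorule^{-1}$ is a GCA of radius $\rho(\glorule^{-1})$ whose local rule fixes $e^\ZZ$, each application of $\glorule^{-1}$ can move the position of a left front to the right by at most $\rho(\glorule^{-1})$; hence $pos(\glorule^{-i}(c)) \le k + i\,\rho(\glorule^{-1})$ for every $i\geq 0$.

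For the base case I would take $n \le k_\glorule$: then the bound above gives $pos(\glorule^{-n}(c)) \le k + k_\glorule\,\rho(\glorule^{-1})$, so the hypothesis $pos(\glorule^{-n}(c)) > k + k_\glorule\,\rho(\glorule^{-1})$ is never met and the statement holds vacuously. For the inductive step I assume $n > k_\glorule$ and that the statement is already established for every smaller index. I would then reuse the mechanism from the proof of Lemma~\ref{l1}. Applying Lemma~\ref{lyapunov} to $c' \in L(k')$ yields some $l \in \{-k_\glorule, \dots, k_\glorule\}$ with $pos(\glorule^l(c')) > k'$; since $c'$ is lower angular, no backward iterate within $k_\glorule$ steps exceeds $k'$, so $l$ must be positive, and I take it \emph{minimal} in $\{1, \dots, k_\glorule\}$. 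Exactly as in Lemma~\ref{l1}, minimality forces $d := \glorule^l(c')$ to be lower angular as well: the forward iterates $\glorule^0(c'),\dots,\glorule^l(c')$ stay $\le pos(d)$ by minimality of $l$, while the remaining backward iterates $\glorule^{-(i-l)}(c')$ with $i-l\in\{1,\dots,k_\glorule-l\}$ stay $\le k' < pos(d)$ because $c'$ itself is lower angular.

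Now comes the reduction. Since $c' = \glorule^{-n}(c)$ we have $d = \glorule^{l-n}(c) = \glorule^{-(n-l)}(c)$, and $n > k_\glorule \ge l$ gives $1 \le n-l < n$. Moreover $pos(d) > k' > k + k_\glorule\,\rho(\glorule^{-1})$, so $d = \glorule^{-(n-l)}(c)$ satisfies the hypotheses of the lemma for the strictly smaller index $n-l$. By the induction hypothesis $d$ is \emph{not} lower angular, which contradicts what was just proved. Hence $c'$ cannot be lower angular, completing the induction.

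The only delicate point is the middle step, namely that the first forward time at which the position rises above $k'$ produces a configuration that is again lower angular; this is precisely where the minimality of $l$ is needed, and where one must carefully separate the forward iterates (controlled by minimality) from the backward ones (controlled by the lower-angular hypothesis on $c'$), mirroring the argument of Lemma~\ref{l1}. Everything else is routine bookkeeping on front positions together with the radius bound for $\glorule^{-1}$. (Equivalently, one could avoid the explicit induction by iterating this step to build an infinite sequence of lower angular records $\glorule^{L_r}(c')$ with strictly increasing positions, and note that the first index $L_r$ landing in $[\,n-k_\glorule,\,n\,)$ gives a lower angular backward iterate $\glorule^{-(n-L_r)}(c)$ with $n-L_r\le k_\glorule$, directly violating the radius bound $pos(\glorule^{-(n-L_r)}(c)) \le k + k_\glorule\,\rho(\glorule^{-1})$.)
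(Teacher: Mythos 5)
Your proof is correct, and it is organized differently from the paper's. The paper gives a one-shot extremal argument: it picks $c_{max}$, the configuration of maximum position among $\{\glorule^{-i}(c)\}_{0\leq i\leq n}$, observes that maximality (together with the assumed lower angularity of $\glorule^{-n}(c)$) makes $c_{max}$ a lower angular configuration whose iterates within $k_\glorule$ steps in \emph{either} direction cannot exceed $pos(c_{max})$, and then derives the contradiction with Lemma~\ref{lyapunov} from the radius bound $pos(c_{max})-pos(c)\leq m\,\rho(\glorule^{-1})$, which forces $c_{max}$ to lie more than $k_\glorule$ backward steps from $c$. You instead run a strong induction on $n$, with the base case $n\leq k_\glorule$ killed by the same radius bound, and an inductive step that advances a lower angular ``record'' forward by a minimal $l\in\{1,\dots,k_\glorule\}$ (the mechanism of Lemma~\ref{l1}) to land on a strictly smaller index $n-l$ still violating the position threshold. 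Both proofs use exactly the same three ingredients --- the radius bound for $\glorule^{-1}$, Lemma~\ref{lyapunov}, and the propagation of lower angularity to the next record --- but your version makes fully explicit the verification that the intermediate record is lower angular (separating forward iterates controlled by minimality of $l$ from backward ones controlled by the hypothesis on $c'$), a point the paper dispatches with ``by construction''; the price is length. Your closing parenthetical, which replaces the induction by following the records down to within $k_\glorule$ steps of $c$ and invoking the radius bound there, is essentially a de-recursified version of your own argument and is the closest in spirit to the paper's, differing only in whether the final contradiction is drawn at $c_{max}$ via Lemma~\ref{lyapunov} or near $c$ via the radius bound.
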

\begin{proof}

Assume, for contradiction, that $\glorule^{-n}(c)$ is a lower angular configuration
with  $pos(\glorule^{-n}(c)) > k+k_\glorule\,\rho(\glorule^{-1})$.
Let $c_{max}$ be a configuration in  
$\{\glorule^{-i}(c)\}_{0\leq i \leq n}$ having the maximum position.
This ensures that
$pos(c_{max})\geq pos(\glorule^{-n}(c))$. By construction, $c_{max}$  is a lower  angular configuration as well. 
In order to have $\glorule^m(c_{max})=c$ we need that 
\begin{eqnarray*}
    m &\geq& \frac{pos(c_{max})-pos(c)}{\rho(\glorule^{-1})}\\
    &\geq& \frac{pos(\glorule^{-n}(c))-pos(c)}{\rho(\glorule^{-1})}\\
     &>& k_\glorule
\end{eqnarray*}
But $m$ cannot be larger than $k_\glorule$, otherwise $c_{\max}$ would violate 
Lemma~\ref{lyapunov}.
\end{proof}
In the reminder of this section, we will deal with GCAs on abelian groups. For this reason,  for a sake of simplicity, the ``additive'' notation will be employed: the group operation will be denoted by $+$, and so  $-$ will be used as far as an inverse element is involved.  

We are now ready to prove the following theorem, which establishes the equivalence between the expansivity of $\glorule$ and the positive expansivity of $\glorule - \glorule^{-1}$.

\begin{theorem}\label{su_p}
Let $p>1$ be a prime number and $n\in \N$. Let $\glorule$ be any injective GCA on $G=(\ZZ/p\ZZ)^n$. It holds that 
$\glorule$ is expansive if and only if the GCA $\HH=\glorule-\glorule^{-1}$ on $G$ is positively expansive.
\end{theorem}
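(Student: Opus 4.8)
The plan is to prove both implications through the front-based characterization of expansivity and of positive expansivity (Observation~\ref{exp_vs_front} together with its positive counterpart), exploiting the additivity of $\glorule$ over the field $\mathbb{F}_p$. The algebraic starting point is the binomial expansion
$$\HH^n=(\glorule-\glorule^{-1})^n=\sum_{m=-n}^{n}c_m\,\glorule^{m},\qquad c_m=(-1)^{\frac{n-m}{2}}\binom{n}{\frac{n+m}{2}}$$
valid because $\glorule$ and $\glorule^{-1}$ commute (with the convention $c_m=0$ when $m\not\equiv n\pmod 2$). The crucial observation, specific to working over $\mathbb{F}_p$, is that the two extreme coefficients $c_{n}=1$ and $c_{-n}=(-1)^{n}$ are units; multiplying a nonzero leading coordinate by a unit keeps it nonzero.

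For the easy implication, $\HH$ positively expansive $\Rightarrow$ $\glorule$ expansive, I would argue directly on fronts. Given $c\in L(k)$, the positive counterpart of Observation~\ref{exp_vs_front} supplies $n>0$ with $pos(\HH^{n}(c))>k$. Since $\HH^{n}(c)=\sum_{m=-n}^{n}c_m\glorule^{m}(c)$ and the rightmost nonzero coordinate of a finite sum cannot lie strictly to the right of the rightmost nonzero coordinate of every summand, some index $m\in\{-n,\dots,n\}$ (with $c_m\neq 0$) satisfies $pos(\glorule^{m}(c))\geq pos(\HH^{n}(c))>k$, which is condition~\eqref{eq001}. The symmetric argument for $c\in R(k)$, replacing ``rightmost'' by ``leftmost'', yields condition~\eqref{eq002}, so $\glorule$ is expansive by Observation~\ref{exp_vs_front}.

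The hard implication is $\glorule$ expansive $\Rightarrow$ $\HH$ positively expansive, where I must produce for each $c\in L(k)$ some $n>0$ with $pos(\HH^{n}(c))>k$ (and the symmetric statement for right fronts). Iterating Lemma~\ref{lyapunov} shows that the positions $p(m):=pos(\glorule^{m}(c))$ are unbounded above as $m$ ranges over $\ZZ$, hence tend to $+\infty$ along at least one of the two time directions; I treat the backward direction (the forward one being symmetric, with lower angular configurations and Lemma~\ref{l1}$(2)$ replaced by upper ones and Lemma~\ref{l1}$(1)$). The aim is to choose $n$ so that the extreme backward term $\glorule^{-n}(c)$, which carries the unit coefficient $c_{-n}=(-1)^{n}$, contributes a nonzero coordinate at position $p(-n)>k$ that is not cancelled. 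Here Lemma~\ref{noangular} is the engine: once $p(-N)$ exceeds $k+k_\glorule\,\rho(\glorule^{-1})$, the configuration $\glorule^{-N}(c)$ is not a lower angular configuration, which forces $p(-N-i)>p(-N)$ for some $i\in\{1,\dots,k_\glorule\}$; feeding this back into Lemma~\ref{noangular} shows the backward positions increase without bound in steps of size at most $k_\glorule$. Using this controlled monotonicity I would select a backward time $-n$ at which $\glorule^{-n}(c)$ realizes the maximal position of the whole window $\{-n,\dots,n\}$, so that its leading coordinate survives and $pos(\HH^{n}(c))=p(-n)>k$.

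The main obstacle is precisely this last step: excluding cancellation at the maximal position of $\sum_m c_m\glorule^{m}(c)$. Cancellation can only occur when several terms attain the same maximal position, the most delicate case being a tie between the two extremes $m=\pm n$, whose coefficients are both units. The purpose of the angular-configuration machinery (Definition~\ref{angular}, Lemmas~\ref{l1} and~\ref{noangular}) is exactly to track how the leading front propagates and to isolate a window in which the top position is carried by a single surviving coordinate, the field structure of $\mathbb{F}_p$ then guaranteeing that a unit multiple of that nonzero coordinate does not vanish. Once the left-front condition is secured, the identical argument on right fronts $c\in R(k)$ provides the remaining positive-expansivity condition, establishing that $\HH$ is positively expansive.
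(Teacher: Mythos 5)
Your first implication ($\HH$ positively expansive $\Rightarrow$ $\glorule$ expansive) is correct and is essentially the paper's argument read directly rather than contrapositively: the position of a sum cannot exceed the maximum of the positions of its summands, so a front that escapes under some $\HH^n$ must escape under some $\glorule^m$.

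The converse direction has a genuine gap, and it sits exactly where you locate ``the main obstacle.'' You expand $\HH^n=\sum_{m=-n}^{n}c_m\glorule^{m}$ and hope to choose $n$ so that the extreme term $\glorule^{-n}(c)$ alone carries the maximal position; but the middle coefficients $c_m$ are in general nonzero binomial coefficients modulo $p$, several iterates $\glorule^{m}(c)$ can tie at the maximal position of the window (Lemmas~\ref{l1} and~\ref{noangular} control only the record positions, at spacing at most $k_\glorule$, and say nothing about how many non-record iterates reach the same height), and their contributions can cancel in $(\ZZ/p\ZZ)^n$. The angular-configuration machinery does not ``isolate a window in which the top position is carried by a single surviving coordinate''; no such isolation is established, and your proposal never supplies an argument for it. The paper's proof avoids the problem entirely by a characteristic-$p$ device you do not use: it evaluates $\HH$ only at exponents $p^\alpha$, where the Frobenius identity gives $\HH^{p^\alpha}=\glorule^{p^\alpha}-\glorule^{-p^\alpha}$ exactly, so only a two-term cancellation must be excluded. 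If $pos(\HH^{p^\alpha}(c))\leq k$ for all $\alpha$, then $\glorule^{p^\alpha}(c)$ and $\glorule^{-p^\alpha}(c)$ agree on $[k+1,+\infty)$; choosing a record time $q_\alpha$ with $|p^\alpha-q_\alpha|\leq k_\glorule$ makes $\glorule^{q_\alpha}(c)$ a lower angular configuration of arbitrarily large position, forces $\glorule^{-q_\alpha}(c)$ to be one as well, and this contradicts Lemma~\ref{noangular} once the position exceeds $k+k_\glorule\,\rho(\glorule^{-1})$. Without the restriction to $p$-power exponents (or some other mechanism killing the interior terms), your argument does not close. A secondary inaccuracy: you invoke Lemma~\ref{noangular} to conclude that backward positions ``increase without bound in steps of size at most $k_\glorule$''; that growth statement is what Lemma~\ref{l1} provides for angular configurations, whereas Lemma~\ref{noangular} is purely a non-existence statement used to produce the final contradiction.
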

\begin{proof} 
We are going to show the two implications separately.

We first prove that if $\HH$ is positively expansive, then $\glorule$ is expansive. \\
Suppose instead that $\glorule$ is not expansive.
Then, there exists a configuration $c \in L(k)$
(the same reasoning applies to any $c \in R(k)$ as well) 
such that for every $i \in \ZZ$ we have $pos(\glorule^i(c))\leq k$.
Since for every $m\in \N$
\[
\HH^m(c) = (\glorule - \glorule^{-1})^{m}(c)
= \sum_{j=0}^{m} \binom{m}{j}\,\glorule^{\,m-2j}(c),
\]
we get that $pos(\HH^m(c)) \leq k$. 
Hence, $\HH$ is not positively expansive.
%\textbf{STIAMO USANDO IL CORRISPETTIVO DEL LEMMA 1 PER L'ESPANSIVITA POSITIVA}
\smallskip

We now prove that if 
$\glorule$ is expansive, then $\HH$ is positively expansive.\\
Let $c$ be a configuration in $L(k)$ (the same reasoning applies to $R(k)$ as well).
Let $U(c)$ and $D(c)$ be two sequences of integers defined by
\begin{eqnarray*}
  &&  \forall i\geq 0:\ D(c)_i=pos(\glorule^{i}(c))\text{  and  }\\
   && \forall i\leq 0:\ U(c)_i=pos(\glorule^{i}(c)).
\end{eqnarray*}
If $\glorule$ is expansive, then at least one of the two sequences is unbounded.  
Without loss of generality, suppose that $D(c)$ is unbounded.  
This implies that  there is an infinite ordered set $I(c)$ of positive integers
such that 
\begin{equation*}\label{eq:I(c)}
    \forall i\in I(c)\;\; \forall j\in\{0,\dots,i-1\}:\ D(c)_i > D(c)_j.
\end{equation*}
From the definition of  $I(c)$, it immediately follows that 
$$ \forall l\in I(c) \text{ with } l\geq k_\glorule:\ \glorule^{l}(c) \text{ is a lower angular configuration.}
$$
\noindent
By applying Lemma~\ref{l1}, we can state that
$$
\forall l\in I(c) \text{ with } l\geq k_\glorule:\ I(c)_{l+1}-I(c)_{l}\leq k_\glorule.
$$
In other words, two successive lower angular configurations cannot be more than $k_\glorule$ iterations of $\glorule$ apart.

% refine property $(1)$ by stating that for every $i \geq ang(c)$, at least one of $D_c(i+1), \dots, D_c(i+k_\glorule)$ is strictly greater than $D_c(i)$.

% If $\glorule$ is expansive  at least one of $U_c$ and $D_c$ is unbounded.

For $\HH$ not to be positively expansive, it is necessary that there exists $c \in L(k)$ such that for every %$i \in \ZZ$ \textbf{OCCHIO DOVREBBE ESSERE POSITIVO QUINDI} 
$i \in \N$  we have $pos(\HH^i(c)) \leq k$ and, in particular,
$pos(\HH^{p^\alpha}(c)) \leq k$ for every $\alpha \geq 0$.
Since $\HH^{p^\alpha} = \glorule^{p^\alpha} - \glorule^{-p^\alpha}$, we conclude that 
$$
pos(\HH^{p^\alpha}(c)) \leq k \implies \glorule^{p^\alpha}(c)_{[k+1,+\infty)} = \glorule^{-p^\alpha}(c)_{[k+1,+\infty)}.
$$  
Unfortunately, it is not necessarily the case that $p^\alpha \in I(c)$ or, equivalently, that $\glorule^{p^\alpha}(c)$ is a lower angular configuration.
But, by Lemma~\ref{l1},  we can always find an index $q_\alpha\in I(c)$ such that $|p^\alpha - q_\alpha| \leq k_\glorule$. Since $q_\alpha\in I(c)$, then $\glorule^{q_\alpha}(c)$ is a lower angular 
configuration.

By taking $\alpha$ sufficiently large, we can ensure that $pos(\glorule^{q_\alpha}(c))$ becomes arbitrarily large as well. Since 
$$\glorule^{p^\alpha}(c)_{[k+1,+\infty)} = \glorule^{-p^\alpha}(c)_{[k+1,+\infty)}
\text{  and  }|p^\alpha - q_\alpha| \leq k_\glorule,$$   the configuration $\glorule^{-q_\alpha}(c)$   is guaranteed to be a lower angular configuration.
 However, by Lemma~\ref{noangular}, $\glorule^{-q_\alpha}(c)$ cannot be an angular configuration.
\end{proof}

Theorem~\ref{su_p} can be generalized to abelian groups as follows.

\begin{theorem}\label{su_ab}
Let $\glorule$ be any injective GCA on an abelian group $G$. It holds that 
$\glorule$ is expansive if and only if the GCA $\HH=\glorule-\glorule^{-1}$ on $G$ is positively expansive.
\end{theorem}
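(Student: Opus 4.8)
The plan is to isolate the single place where the characteristic-$p$ hypothesis of Theorem~\ref{su_p} is actually used and to remove it by a two-stage reduction, leaving the rest of that argument intact. First, note that one implication is already completely general. The proof in Theorem~\ref{su_p} that positive expansivity of $\HH$ implies expansivity of $\glorule$ never uses the structure of $(\ZZ/p\ZZ)^n$: it only uses that for $c\in L(k)$ (or $R(k)$) with $pos(\glorule^i(c))\le k$ for all $i\in\ZZ$, the identity $\HH^m(c)=\sum_{j=0}^m\binom{m}{j}(-1)^j\glorule^{m-2j}(c)$ forces $pos(\HH^m(c))\le k$ for every $m\ge 0$, since the position of a sum is at most the maximum of the positions of the summands. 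Hence this direction holds verbatim over an arbitrary abelian $G$, and only the converse, ``$\glorule$ expansive $\Rightarrow\HH$ positively expansive'', requires new work.

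For the converse I would first reduce to prime-power groups. A finite abelian $G$ is the direct product of its $p$-primary components $G_p$, and because every element of $G^\ZZ$ has order dividing the exponent of $G$, the subgroup $(G_p)^\ZZ$ is exactly the $p$-primary component of the bounded abelian group $G^\ZZ$, hence a fully invariant (characteristic) subgroup. Therefore $\glorule$, $\glorule^{-1}$ and $\HH=\glorule-\glorule^{-1}$ all split as finite direct products along $G^\ZZ=\prod_p(G_p)^\ZZ$, say $\glorule=\prod_p\glorule_p$ and $\HH=\prod_p\HH_p$ with $\HH_p=\glorule_p-\glorule_p^{-1}$. Both expansivity and positive expansivity of a map on a finite product of factors are equivalent to the same property on each factor (take the minimum of the separation constants), so it suffices to prove the converse when $G$ is a finite abelian $p$-group.

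So let $G$ be a $p$-group and assume $\glorule$ is expansive. Consider the socle $S=G[p]=\{g\in G:pg=e\}\cong(\ZZ/p\ZZ)^n$, a characteristic subgroup; then $S^\ZZ$ is a closed subgroup invariant under both $\glorule$ and $\glorule^{-1}$, and $\glorule|_{S^\ZZ}$ is an injective GCA on $(\ZZ/p\ZZ)^n$ with inverse $\glorule^{-1}|_{S^\ZZ}$. Since the restriction of an expansive homeomorphism to a closed invariant subset is expansive, $\glorule|_{S^\ZZ}$ is expansive, and Theorem~\ref{su_p} then gives that $\HH|_{S^\ZZ}=\glorule|_{S^\ZZ}-(\glorule|_{S^\ZZ})^{-1}$ is positively expansive. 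What remains is to lift positive expansivity from the socle back to $G$, that is, to deduce from positive expansivity of $\HH|_{S^\ZZ}$ that $\HH$ itself is positively expansive on $G^\ZZ$.

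This lifting step is the main obstacle, and I expect to settle it through the known characterization of positive expansivity for abelian GCA: writing the local rule of $\HH$ as $(h_{-\rho},\dots,h_{\rho})$ with $h_i\in\End(G)$, positive expansivity is equivalent to the two extreme endomorphisms $h_{-\rho},h_{\rho}$ being automorphisms of $G$. Restriction to the socle sends $h_i$ to $h_i|_S\in\End(S)$, and for a finite abelian $p$-group an endomorphism $h$ is an automorphism iff it is injective iff $h|_S$ is injective (every nonzero subgroup of a $p$-group meets $S$) iff $h|_S\in\Aut(S)$; hence the extreme-coefficient condition on $G$ coincides with the one on $S$, which is exactly the lift. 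The only bookkeeping to verify is that restriction to the socle does not shrink the radius of $\HH$, but an extreme endomorphism vanishing on $S$ would be non-injective, so $\HH|_{S^\ZZ}$ could not be positively expansive, ruling this out. Combining the four steps yields the converse for $p$-groups, and together with the general easy direction and the primary decomposition this proves the theorem. If one prefers not to quote the external characterization, the same lift can be carried out directly with the front and angular-configuration machinery of Lemmas~\ref{l1} and~\ref{noangular}, by transporting a hypothetical non-escaping front of $\HH$ on $G$ down to the socle via multiplication by a suitable power of $p$; controlling the front position under that multiplication is the delicate point there.
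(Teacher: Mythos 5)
Your overall architecture --- observing that the easy direction of Theorem~\ref{su_p} is characteristic-free, decomposing $G$ into primary components, restricting to an elementary abelian subgroup where Theorem~\ref{su_p} applies, and then lifting back --- is essentially the route the paper takes (the paper compresses the reduction into a citation of the technique of \cite{DennunzioFormentiMargara2023}). The first three steps are fine. The gap is in the lift, which is the only genuinely nontrivial step: the characterization you invoke, ``a GCA on a finite abelian group is positively expansive iff the extreme endomorphisms $h_{-\rho},h_{\rho}$ are automorphisms,'' is false. That condition is bipermutivity; it is sufficient for positive expansivity but not necessary. Concretely, over $G=\ZZ/p^{2}\ZZ$ the rule $f(x_{-2},\dots,x_{2})=px_{-2}+x_{-1}+x_{1}+px_{2}$ is positively expansive (the gcd criterion $\gcd(p^{2},p,1)=1$ holds separately on the left and right halves of the rule), yet $h_{\pm 2}$ is multiplication by $p$, which is not an automorphism and in fact vanishes on the socle $S=pG$. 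The same example defeats your fallback remark: here $\HH|_{S^{\ZZ}}$ is the radius-one rule $x_{-1}+x_{1}$ over $\ZZ/p\ZZ$, which is positively expansive even though the radius-$\rho$ extreme endomorphisms die on $S$; so positive expansivity of the restriction does not force the extreme endomorphisms to be injective, and your chain ``$\HH|_{S^{\ZZ}}$ positively expansive $\Rightarrow$ extremes of $\HH|_{S^{\ZZ}}$ invertible $\Rightarrow$ extremes of $\HH$ invertible $\Rightarrow$ $\HH$ positively expansive'' breaks at both ends.

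What the lift actually requires is the statement that, for a finite abelian $p$-group, positive expansivity of a GCA is equivalent to positive expansivity of its mod-$p$ reduction (equivalently, of its restriction to the socle). This is true, but it is precisely the content of the characterization in \cite{DennunzioFormentiMargara2023}: the conditions there are gcd/characteristic-polynomial conditions on the entire left half and the entire right half of the local rule, not on the two extreme coefficients, and they depend only on the coefficients modulo $p$; one also needs the embedding of an arbitrary abelian $p$-group into $(\ZZ/p^{k}\ZZ)^{n}$, which your sketch glosses over. So your outline is compatible with the paper's intended argument, but the one step you needed to supply is exactly the one you replaced with an incorrect lemma, and as written the argument does not close. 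The alternative you mention at the end (transporting a non-escaping front of $\HH$ down to the socle by multiplying by a power of $p$) is delicate for the reason you yourself flag: multiplication by a power of $p$ can shift the front position or annihilate the front altogether, so it would need the same mod-$p$ analysis you are trying to avoid.
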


\begin{proof} 
For $G=(\ZZ/p\ZZ)^n$ the thesis follows from Theorem~\ref{su_p}.
To extend the result from $G = (\ZZ/p\ZZ)^n$ to general abelian groups, 
it is enough to apply the technique introduced in~\cite[Theorem~2, Corollary~1, and Theorem~3]{DennunzioFormentiMargara2023}.
\end{proof}
\begin{corollary}
Expansivity is an easy-to-check property for injective GCAs on abelian groups. 
\end{corollary}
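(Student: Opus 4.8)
The plan is to read the corollary as a direct computational consequence of Theorem~\ref{su_ab}, which reduces expansivity of $\glorule$ to positive expansivity of the auxiliary GCA $\HH=\glorule-\glorule^{-1}$. The whole task therefore splits into three effectivity claims: that $\glorule^{-1}$ can be produced effectively from $\glorule$, that $\HH$ is again a GCA on $G$ and can be written down explicitly, and that positive expansivity of a GCA on an abelian group can be decided by inspecting its local rule. Since the corollary is stated for injective GCAs, the existence of $\glorule^{-1}$ is guaranteed, and one may also note for completeness that injectivity is itself decidable for GCAs, so the procedure is well-posed even when reversibility must first be checked.

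First I would argue that the inverse GCA is computable. Because $\glorule$ is injective it is reversible and $\glorule^{-1}$ is itself a GCA on $G$; since $G$ is abelian, $\glorule$ is described by finitely many endomorphisms $h_{-\rho},\dots,h_{\rho}\in\End(G)$, and both the local rule of $\glorule^{-1}$ and a bound on its radius $\rho(\glorule^{-1})$ can be recovered effectively from these data. Forming the componentwise difference then yields $\HH=\glorule-\glorule^{-1}$, which is again a GCA on $G$ whose local rule is obtained by combining those of $\glorule$ and $\glorule^{-1}$; producing it explicitly is a routine finite computation.

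Next I would invoke Theorem~\ref{su_ab} to obtain the equivalence that $\glorule$ is expansive if and only if $\HH$ is positively expansive. The last ingredient is the result recorded in Section~\ref{literature}, namely that positive expansivity is efficiently decidable for $1$-GCA on abelian groups by analysing the local rule. Running that test on $\HH$ and returning its answer as the verdict for the expansivity of $\glorule$ constitutes the decision procedure, and since every stage operates on the finite local data rather than on the infinite space-time dynamics, the check is efficient.

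The step I expect to carry the real content — the remainder being bookkeeping — is controlling the passage through the inverse: one must be sure that $\glorule^{-1}$, and hence $\HH$, can be produced with a local rule of effectively, and ideally polynomially, bounded size, so that the downstream positive-expansivity test is applied to a genuinely finite object of manageable size. Once the inverse is under control, everything else is the mechanical composition of Theorem~\ref{su_ab} with the cited decidability of positive expansivity on abelian groups.
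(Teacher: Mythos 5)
Your proposal matches the paper's proof: both reduce expansivity to positive expansivity of $\HH=\glorule-\glorule^{-1}$ via Theorem~\ref{su_ab}, rely on the easy computability of $\glorule^{-1}$ for injective GCAs on abelian groups, and then invoke the known easy-to-check test for positive expansivity from the cited literature. The extra care you take about bounding the radius of $\glorule^{-1}$ is a reasonable elaboration but does not change the argument.
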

\begin{proof}
By the results from~\cite{DennunzioFormentiMargara2023} positive expansivity turns out to be an easy-to-check property for GCAs on abelian groups. %Since injectivity is also an easy-to-check property for GCAs on abelian groups, 
Moreover, 
%since establishing if a GCA on an abelian group is 
since the inverse of an injective GCA on an abelian group is easily computable,
by Theorem~\ref{su_ab} we conclude that expansivity is an easy-to-check property for injective GCAs on abelian groups as well.
\end{proof}
\begin{comment}
\begin{remark}
We stress that, by the results from~\cite{DennunzioFormentiMargara2023} positive expansivity turns out to be an easy-to-check property for GCA on abelian groups. Moreover, since the inverse of a GCA on an abelian group is easily computable, we conclude that  expansivity is an easy-to-check property for GCAs on abelian groups as well.
\end{remark}
\end{comment}
\section{Expansivity on general groups}
\label{expgen}
%========================================================
%========================================================
%========================================================
Let $G$ be a finite group and $H$ be any normal subgroup of $G$.
%or every element $g\in G$, the coset  of $g$  in $G/H$, denoted by $[g]$, is defined as
%$[g]=\{gh:\ h\in H\}$. 
%Accordingly,   
For every element $c=(\cdots c_{-1}c_0c_1\cdots)\in G^\ZZ$, the coset  of $c$  in $(G/H)^\ZZ$, denoted by $[c]$, is defined as
$[c]=(\cdots [c_{-1}][c_0][c_1]\cdots)$.

\begin{definition}\label{tilde_barra_def}
Let $G$ be a finite group and let $H \trianglelefteq G$.  
Let $\glorule$ be a GCA on $G$ such that $\glorule(H^\ZZ) \subseteq H^\ZZ$.  
We define the maps 
\begin{align*}
\overline{\glorule}_{H} & : H^\ZZ \to H^\ZZ, \\
\widetilde{\glorule}_{H} & : (G/H)^\ZZ \to (G/H)^\ZZ
\end{align*}
as follows:
\begin{align}
\forall c \in H^\ZZ: \quad & \overline{\glorule}_H(c) := \glorule(c), \label{Fbar} \\
\forall [c] \in (G/H)^\ZZ: \quad & \widetilde{\glorule}_H([c]) := [\glorule(c)]. \label{Ftilde}
\end{align}
\end{definition}

Note that, by Equations~\eqref{Fbar} and~\eqref{Ftilde}, and since $\glorule(H^\ZZ)\subseteq H^\ZZ$,  
the maps $\overline{\glorule}_H$ and $\widetilde{\glorule}_H$ are well-defined GCAs on
$H^\ZZ$ and $(G/H)^\ZZ$, respectively. From now on, when the subgroup $H$ is clear from the context, we will simplify the notation by writing $\overline{\glorule}$ and $\widetilde{\glorule}$ instead of $\overline{\glorule}_{H}$ and $\widetilde{\glorule}_{H}$.

It is straightforward to verify that if $\glorule$ is a GCA on $G$ with local rule $f=(h_{-\rho},\ldots,h_{\rho})$ and $H\trianglelefteq G$, then 
$$
\glorule(H^\ZZ)\subseteq H^\ZZ
  \iff  
f(H^{2\rho+1})\subseteq H 
 \iff  
h_i(H)\subseteq H \ \text{for all } -\rho \leq i \leq \rho.
$$

Moreover, if $f=(h_{-\rho},\ldots,h_{\rho})$ is the local rule of $\glorule$, then the local rules $\overline{f}$ and $\widetilde{f}$ of $\overline{\glorule}$ and $\widetilde{\glorule}$ are given by
$\overline{f}=(h_{-\rho}|_H,\ldots,h_{\rho}|_H)$ and $
\widetilde{f}=(\widetilde{h}_{-\rho},\ldots,\widetilde{h}_{\rho})$,
where $h_i|_H$ denotes the restriction of $h_i$ to $H$, and 
$\widetilde{h}_i([x])=[h_i(x)]$ for all  $x\in G$.

The next theorem will enable us to reduce the study of the expansivity of any GCA to 
the study of the expansivity of two GCAs defined on smaller groups.
%We will use the following theorem proved in Section A
\begin{theorem}\label{expansive}
Let $G$ be a finite group and let $H\trianglelefteq G$. Let $\glorule$ be an injective GCA on $G$ such that $\glorule(H^{\mathbb Z})\subseteq H^{\mathbb Z}$. It holds that $\glorule$ is expansive if and only if both the (injective) GCAs $\widetilde{\glorule}$ and $\overline{\glorule}$ are expansive. 
\end{theorem}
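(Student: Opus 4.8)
The plan is to prove both implications through the front-based reformulation of expansivity (Observation~\ref{exp_vs_front} together with the uniform escape bound of Lemma~\ref{lyapunov}), after recording a few structural facts. First I would note that $\glorule(H^\ZZ)=H^\ZZ$: the restriction $\overline{\glorule}=\glorule|_{H^\ZZ}$ is an injective CA on $H^\ZZ$, hence surjective, so its image is all of $H^\ZZ$. Consequently $\widetilde{\glorule}$ is injective (if $\glorule(c)\in H^\ZZ=\glorule(H^\ZZ)$ then $c\in H^\ZZ$) and therefore reversible, as is $\overline{\glorule}$, so expansivity is meaningful for both. Throughout I use two elementary reductions: $\glorule^{-1}$ is expansive whenever $\glorule$ is, and, since $\glorule$ is an endomorphism, separating $c$ from $c'$ is the same as separating $b:=c(c')^{-1}$ from $e^\ZZ$. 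Thus expansivity says exactly that the $\glorule$-orbit of every $b\neq e^\ZZ$ eventually places a nontrivial cell inside a fixed central window.

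The ``if'' direction is the routine one. I would fix an expansivity window for each of $\widetilde{\glorule}$ and $\overline{\glorule}$ and work with the larger of the two. Given $b\neq e^\ZZ$, split into two cases. If $b\notin H^\ZZ$, then $[b]\neq[e^\ZZ]$, and expansivity of $\widetilde{\glorule}$ yields an $n$ for which $\widetilde{\glorule}^{\,n}([b])$ is nontrivial inside the window; lifting, this means some cell of $\glorule^n(b)$ lies outside $H$, hence is $\neq e$, so $\glorule$ separates $b$. If instead $b\in H^\ZZ$, then $\glorule^n(b)=\overline{\glorule}^{\,n}(b)$ for all $n$, and expansivity of $\overline{\glorule}$ does the separating. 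The common window then witnesses expansivity of $\glorule$.

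For the ``only if'' direction, $\overline{\glorule}$ is simply the restriction of $\glorule$ to the closed, $\glorule$-invariant subgroup $H^\ZZ$, so it inherits expansivity with the same constant. The real content is proving $\widetilde{\glorule}$ expansive, which I would approach by contraposition through fronts. If $\widetilde{\glorule}$ were not expansive, then for the very constant $k_{\glorule}$ supplied by Lemma~\ref{lyapunov} for $\glorule$, some front of $\widetilde{\glorule}$ must fail to escape within $k_{\glorule}$ steps; say a left front $[c]\in L(k)$ in $(G/H)^\ZZ$ with $pos(\widetilde{\glorule}^{\,n}([c]))\le k$ for all $|n|\le k_{\glorule}$ (the right-front case is symmetric, via $R(k)$). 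Lift $[c]$ to a left front $B\in L(k)$ in $G^\ZZ$ by taking $B_j=e$ for $j>k$ and $B_k\notin H$. Since the mod-$H$ orbit of $B$ stays at position $\le k$, every cell of $\glorule^n(B)$ strictly to the right of $k$ automatically lies in $H$; hence the only way $B$ can escape the window under $\glorule$ is through $H$-valued cells.

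Ruling out this $H$-valued escape is, I expect, the main obstacle and the crux of the theorem. The plan is to exploit the freedom remaining in the lift: the cells $B_i$ with $i<k$ may be multiplied by arbitrary elements of $H$ without changing $[c]$ or the membership $B\in L(k)$. The aim is to choose this $H$-correction so that $(\glorule^n(B))_j=e$ for every $j>k$ and every $|n|\le k_{\glorule}$ — a finite system of equations, each valued in $H$ and driven by the single fixed cell $B_k\notin H$. Solving it should rely on surjectivity of $\overline{\glorule}$ (equivalently, of the restrictions $h_i|_H$ of the local rule onto $H$), cancelling outward in $j$ the $H$-content that leaks past position $k$, and using $\glorule(H^\ZZ)=H^\ZZ$. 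A corrected lift $B'$ is then a genuine left front in $L(k)$ that does not escape within $k_{\glorule}$ iterations of $\glorule$, contradicting Lemma~\ref{lyapunov}; hence $\widetilde{\glorule}$ is expansive after all. I anticipate that the non-abelian case forces care in the order of cancellation, since distinct local-rule components commute only across different coordinates, and that a compactness argument may be needed to pass from cancellation in each finite window to a single non-escaping configuration.
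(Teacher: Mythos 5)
Your ``if'' direction and the restriction half of the ``only if'' direction are correct: splitting $b=c(c')^{-1}$ according to whether $b\in H^\ZZ$ and using the quotient (resp.\ restricted) expansivity constant works, and $\overline{\glorule}$, being the restriction of $\glorule$ to a closed invariant subgroup, trivially inherits expansivity. Your injectivity observations ($\glorule(H^\ZZ)=H^\ZZ$ by reversibility of $\overline{\glorule}$, hence $\widetilde{\glorule}$ has trivial kernel) are also fine. Be aware, though, that the paper does not prove this theorem from scratch: it checks that the metric definition of expansivity agrees with the neighborhood definition and then invokes Theorem~A of Gl\"ockner--Raja (plus an earlier injectivity result) for the whole equivalence. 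So what you are attempting to reprove by hand is precisely that external theorem, whose hard content is the implication you leave open.

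The gap is exactly where you locate ``the crux,'' and it is not closed. To get a non-escaping front of $\glorule$ from a non-escaping front $[c]$ of $\widetilde{\glorule}$, you propose to correct a lift $B$ by $H$-elements at positions $\le k$ so that all $H$-valued leakage at positions $j>k$ vanishes for $|n|\le k_\glorule$. But the set of values reachable at a leaked cell, say $\glorule(B)_{k+1}=h_{-\rho}(B_{k+1-\rho})\cdots h_{-1}(B_k)$, by perturbing $B_{k+1-\rho},\dots,B_k$ within their $H$-cosets is the coset of the subgroup $h_{-\rho}(H)\cdots h_{-1}(H)$ (the factors from distinct $h_i$'s centralize one another), whereas the leaked value is only known to lie in $H$. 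In general $h(G)\cap H$ strictly contains $h(H)$ --- already for $h(x)=2x$ on $\ZZ/4\ZZ$ with $H=\{0,2\}$ --- so the cell-by-cell equations need not be solvable. Your stated tool, ``surjectivity of $\overline{\glorule}$ (equivalently, of the restrictions $h_i|_H$),'' does not rescue this: surjectivity of the global map $\overline{\glorule}$ is not equivalent to surjectivity of each $h_i|_H$ (a shift-like rule already refutes that), and even surjectivity of the full local rule on $H$ only controls the product of \emph{all} the $h_i(H)$, not the one-sided product that actually appears at a leaked cell. Whatever makes the implication true must bring the injectivity/expansivity of $\glorule$ itself into the correction step, and your sketch never does; the hedges ``should rely on,'' ``I expect,'' and ``I anticipate'' mark exactly the portion that constitutes the substance of the cited Theorem~A. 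As written, the hard implication is unproven.
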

\begin{proof}
First of all, note that, by~\cite[Theorem 2]{nostro}, both the GCAs $\widetilde{\glorule}$ and $\overline{\glorule}$ are injective. The equivalence regarding expansivity follows directly from~\cite[Theorem~A]{GlocknerRaja2017}. 
We point out that in~\cite{GlocknerRaja2017} the authors adopt a slightly different definition of expansivity:  
an automorphism $\alpha$ of $G$ is called expansive if there exists a neighborhood $U$ of $e$ such that  
\[
\bigcap_{n \in \mathbb{Z}} \alpha^n(U) = \{e\}.
\]
Since the prodiscrete topology on $G^\mathbb{Z}$ is induced by the bi-invariant 
Tychonoff metric $d$, this definition of expansivity is equivalent to that provided in  Definition~\ref{def:expansivity} from Section~\ref{cellular_a} (see also~\cite{Shah2020}).
\end{proof}
\begin{comment}
A fully invariant subgroup of a group $G$ is a subgroup $H \leq G$ such that, for every $\phi \in \End(G)$, one has $\phi(H)\leq H$.  
A finite group $G$ is called invariantly simple if its only fully invariant 
subgroups are the trivial subgroup and $G$ itself.  
It is known that every invariantly simple group is a direct product of isomorphic simple groups~\cite{nostro}.  
\end{comment}
We now briefly recall (in a slightly modified form) the decomposition function for GCAs introduced in~\cite{nostro}.

\vspace{0.3cm}
\begin{algorithm}[H]
\SetAlgoNlRelativeSize{0}
\SetAlgoNoLine
\SetKwFunction{Decomposition}{Decomposition}
\SetKwProg{Fn}{Function}{:}{}
\Fn{\Decomposition{$\glorule$,$G$}}
{
\If{$G$ is invariantly simple}
{\KwRet{$\{(\glorule,G)\}$\;}}
\Else{
Let $H$ be a non-trivial proper fully invariant subgroup of $G$\;
a  $\gets$ \Decomposition{$\widetilde{\glorule},G/H$}\;
b  $\gets$ \Decomposition{$\overline{\glorule},H$}\;
\KwRet{$a\cup b$}\;}
}
\end{algorithm}
\vspace{0.3cm}
The function {\tt Decomposition} takes as input a GCA $(\glorule, G)$ and produces 
a finite collection $\{(\glorule_1,G_1),\dots,(\glorule_k,G_k)\}$ of GCAs, where each $G_i$ is either abelian or non-abelian invariantly simple (see~\cite{nostro} for details). Moreover, $\glorule$ is injective iff each $\glorule_i$ is, too.

Note that if $G_i$ is abelian, then it is isomorphic to $(\ZZ/p\ZZ)^n$ for some prime $p$ and some $n\in \N$.  
This follows from the fact that all simple abelian groups are isomorphic to $\ZZ/p\ZZ$ for a prime $p$.

We are now ready to state the following result.

\begin{theorem}\label{rem}
Let $\glorule$ be an injective  GCA on $G$ and let
$$\{(\glorule_1,G_1),\dots,(\glorule_k,G_k)\}={\tt Decomposition}(\glorule,G)\enspace,$$ 
be the collection of the necessarily injective GCAs returned by the function {\tt Decomposition} when invoked with argument $(\glorule,G)$. 
%\textbf{SONO TUTTI INJECTIVE}
%(in such a collection of GCA we are sure that each $\glorule_i$ is injective). 
The following equivalence holds: 
$$(\glorule,G) \text{ is expansive } \iff \forall i\in \{1,\dots,k\}:\ (\glorule_i,G_i)\text{ is expansive.}$$
\end{theorem}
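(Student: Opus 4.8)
The plan is to argue by induction on the order $|G|$ of the underlying group, using Theorem~\ref{expansive} as the inductive engine that links the expansivity of $\glorule$ to that of the two GCAs produced in a single recursive step of {\tt Decomposition}.

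First I would dispose of the base case: if $G$ is invariantly simple, then by definition ${\tt Decomposition}(\glorule,G)=\{(\glorule,G)\}$, so the claimed equivalence reduces to the tautology that $\glorule$ is expansive iff $\glorule$ is expansive. For the inductive step, suppose $G$ is not invariantly simple. Then the algorithm selects a non-trivial proper fully invariant subgroup $H$ and returns
$$a\cup b\ =\ {\tt Decomposition}(\widetilde{\glorule},G/H)\ \cup\ {\tt Decomposition}(\overline{\glorule},H).$$
The first thing to check is that Theorem~\ref{expansive} actually applies to the pair $(\glorule,H)$, i.e. that $\glorule(H^\ZZ)\subseteq H^\ZZ$. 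This follows from $H$ being fully invariant: writing the local rule as $f=(h_{-\rho},\dots,h_{\rho})$ with each $h_i\in\End(G)$, full invariance gives $h_i(H)\subseteq H$ for every $i$, which by the characterization recalled just before Theorem~\ref{expansive} is equivalent to $\glorule(H^\ZZ)\subseteq H^\ZZ$. Hence $\overline{\glorule}$ and $\widetilde{\glorule}$ are well-defined GCAs on $H$ and $G/H$, and they are injective by~\cite[Theorem 2]{nostro}, so both the theorem and the inductive hypothesis can legitimately be invoked on them.

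Now Theorem~\ref{expansive} yields that $\glorule$ is expansive iff both $\widetilde{\glorule}$ and $\overline{\glorule}$ are expansive. Since $H$ is proper and non-trivial, both $|G/H|<|G|$ and $|H|<|G|$, so the inductive hypothesis applies to each of the two smaller GCAs: it gives that $\widetilde{\glorule}$ is expansive iff every GCA in $a$ is expansive, and that $\overline{\glorule}$ is expansive iff every GCA in $b$ is expansive. Chaining these three equivalences shows that $\glorule$ is expansive iff every GCA in $a\cup b={\tt Decomposition}(\glorule,G)$ is expansive, which completes the induction.

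The step that requires the most care is the verification, at every recursive call, that the hypothesis of Theorem~\ref{expansive} is met, i.e. that the chosen subgroup $H$ satisfies $\glorule(H^\ZZ)\subseteq H^\ZZ$; this is precisely where the full invariance of $H$ is indispensable, since it is what guarantees that $\overline{\glorule}$ and $\widetilde{\glorule}$ are genuine GCAs on the smaller groups and that the invariance condition is inherited along the recursion. Once this is in place, the remainder is a routine unwinding of the recursive definition of {\tt Decomposition}, and termination is ensured by the strict decrease of the group order at each split.
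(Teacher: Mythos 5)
Your proof is correct and is essentially the argument the paper intends: the paper's own proof is the one-line remark that the claim ``follows from the definition of the {\tt Decomposition} function and from repeated applications of Theorem~\ref{expansive}'', and your induction on $|G|$ (with the verification that full invariance of $H$ yields $\glorule(H^\ZZ)\subseteq H^\ZZ$ so that Theorem~\ref{expansive} applies at each recursive call) is exactly the careful unwinding of that remark.
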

\begin{proof}
The thesis follows from the definition of the {\tt Decomposition} function and from repeated applications of Theorem~\ref{expansive}.
\end{proof}
We recall the following definition (see~\cite{nostro}).
\begin{definition}
A group $G$ that is the product $S_1 \times \cdots \times S_m$ of finite non-abelian isomorphic simple groups $S_i$ is said to be \emph{minimal} with respect to the action of a given GCA $\glorule$ on $G$ if there do not exist two non-empty, disjoint sets $I,J \subseteq \{1,2,\ldots,m\}$ with $I \cup J = \{1,2,\ldots,m\}$ such that
\[
\glorule\left(\left(\prod_{i\in I} S_i\right)^\ZZ\right) \subseteq \left(\prod_{i\in I} S_i\right)^\ZZ   \text{  and  }  
\glorule\left(\left(\prod_{i\in J} S_i\right)^\ZZ\right) \subseteq \left(\prod_{i\in J} S_i\right)^\ZZ.
\]
\end{definition}
Clearly, minimality is a decidable property.

The next theorem plays a crucial role in extending the 
decidability of expansivity from abelian groups to general groups.
\begin{theorem}\label{simpleiso}
Let $G$ be a product of finite non-abelian isomorphic simple groups. Let $\glorule$ be an injective  GCA on $G$. 
%Suppose that $G$ is minimal with respect to $\glorule$. 
It holds that $\glorule$ is topologically transitive if and only if it is expansive. Furthermore, both these properties are decidable.
\end{theorem}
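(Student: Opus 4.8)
The plan is to prove the two assertions separately, starting with the equivalence. One direction is free: by Theorem~\ref{espimplicatrans} every expansive GCA is topologically transitive, so the whole content is to show that on a group $G=S_1\times\cdots\times S_m$ of isomorphic non-abelian simple groups a topologically transitive injective GCA $\glorule$ is expansive. The first step is a structural analysis of $\End(G)$. Since each $S_i$ is non-abelian and simple, every homomorphism $S_1\times\cdots\times S_m\to S$ is, up to an automorphism of $S$, a single coordinate projection: two distinct factors cannot both map onto $S$, because their images would be forced to commute while generating a non-abelian group. Consequently each $h\in\End(G)$ is described coordinate by coordinate by a partial assignment that reads one input coordinate and applies an automorphism of $S$.

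The decisive second step exploits the centralizer condition $\Imma(h_i)\subseteq C_G(\Imma(h_j))$ satisfied by the endomorphisms of the local rule $f=(h_{-\rho},\ldots,h_{\rho})$ of $\glorule$. Fix an output coordinate $j$: if two distinct offsets $i\neq i'$ both fed coordinate $j$, then $\Imma(h_i)$ and $\Imma(h_{i'})$ would both project onto $S_j$ and hence fail to commute. Thus for each output coordinate there is a \emph{unique} offset and a \emph{unique} source coordinate, so that $\glorule$ acts as a permutation of the $m$ ``tracks'', each track being shifted by a fixed integer offset $s_j$ and twisted by some automorphism $\theta_j$ of $S$; injectivity forces this source assignment to be a genuine permutation $\pi$ of $\{1,\dots,m\}$. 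The minimality notion introduced above then translates exactly into $\pi$ being a single cycle, since any proper union of cycles of $\pi$ yields a nontrivial $\glorule$-invariant sub-product.

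In the minimal (single-cycle) case I would show that transitivity and expansivity are both equivalent to the net displacement $\Sigma=\sum_j s_j$ accumulated once around the cycle being nonzero. If $\Sigma=0$, then $\glorule^m$ is, track by track, a cellwise automorphism with zero displacement, hence equicontinuous; therefore $\glorule$ is equicontinuous and in particular not transitive, and it is not expansive. If $\Sigma\neq0$, then $\glorule^m$ is on each track a nonzero shift decorated by an automorphism, which is conjugate to a nonzero power of the full shift and thus both mixing and expansive, so $\glorule$ is transitive and expansive. To discharge the minimality assumption I split $G$ into its minimal $\glorule$-invariant blocks and argue on the resulting finite product: expansivity of a product is equivalent to expansivity of each factor, so $\glorule$ is expansive iff every block is; for transitivity the forward direction is automatic (a factor of a transitive system is transitive), while the converse uses that each block, on a suitable power, is a nonzero shift and hence mixing, so that a product of mixing blocks is again mixing, whence transitive. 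The main obstacle is precisely this structural reduction --- rigorously collapsing injectivity plus the centralizer condition to a permutation-of-tracks-with-shifts --- together with the passage from block-wise to global transitivity, which genuinely needs the mixing property rather than bare transitivity.

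Finally, decidability follows at once. The structural characterization ``every cycle of the track-permutation $\pi$ carries a nonzero net displacement'' is computable from the finite local rule, so both properties are decidable; since they coincide, deciding either decides both. Independently, one obtains decidability from known semidecision procedures: non-transitivity is semidecidable by~\cite{BeaurK24}, and expansivity is semidecidable by Lemma~\ref{lyapunov}, since one may enumerate candidate bounds $k_{\glorule}$ and, for each, the required condition on $L(k)$ and $R(k)$ depends only on a bounded window of coordinates and is therefore checkable in finite time. As transitivity and expansivity are equivalent for this class, exactly one of ``expansive'' and ``non-transitive'' holds, so running both procedures in parallel always halts.
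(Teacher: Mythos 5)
Your proposal is correct and follows essentially the same route as the paper: reduce to the minimal case, show that the centralizer condition on the local rule's endomorphisms forces $\glorule$ to act as a permutation of the simple-factor tracks with integer displacements, and decide everything by whether the net displacement around each cycle vanishes --- your $\Sigma$ is exactly the paper's $\Delta=\sum_i i|J_i|$, which the paper obtains by citing \cite[Lemma~11]{nostro} rather than re-deriving it as you do. One caveat: you should not invoke Theorem~\ref{espimplicatrans} for the ``expansive implies transitive'' direction, since that theorem's non-abelian case is itself proved by appealing to Theorem~\ref{simpleiso}; the circularity is harmless here only because your displacement analysis independently yields both implications. Your remark that the block-wise reduction for transitivity genuinely requires mixing (a product of merely transitive systems need not be transitive) is a point the paper's proof glosses over.
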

\begin{proof}
If $G$ is not minimal with respect to $\glorule$, then we factor $(\glorule,G)$ into the direct product of a finite number  of GCAs $\{(\glorule_1,G_1),\dots, (\glorule_k,G_k)\}$ 
such that every  $G_i$ is a product of  non-abelian isomorphic simple groups that is also minimal with respect to $\glorule_i$. 
Clearly, $\glorule$ is expansive (respectively, topologically transitive) on $G$ if and only if, for every $i \in \{1, \dots, k\}$, $\glorule_i$ is expansive (respectively, topologically transitive) on $G_i$.
Hence, without loss of generality, we will assume for the remainder of this proof that $G$ is minimal with respect to $\glorule$.

Write $G=S_1\times\cdots\times S_m,$ where the $S_i$'s are finite non-abelian isomorphic simple groups. Denote by $f=(h_{-\rho},\ldots,h_{\rho})$ the local rule of $\glorule$.
Assume that $\glorule$ is surjective, otherwise, 
$\glorule$ would be neither topologically transitive nor expansive and the thesis would be proved. 
By~\cite[Lemma~11]{nostro} we know that
for every $i\in \{-\rho,\dots,\rho\}$ there exists  $J_i \subseteq \{1,\dots,m\}$ such that
$\Imma(h_i)=\prod_{t\in J_i}S_t$. Moreover, there exist two positive integers $K$ and $C$ depending only on $G$ and $f$ such that $\glorule^K=\sigma^{-C\Delta}$ where  
\begin{equation} \label{Delta}
\Delta=\sum_{i=-\rho}^{\rho} i|J_i|.
\end{equation}
Hence, either $\glorule^K$ is the identity map (when $\Delta=0$), or it is a non-negative power of the shift map (when $\Delta \neq 0$).
In the first case, $\glorule$ is neither topologically transitive nor expansive, whereas in the second case, $\glorule$ is both topologically transitive and expansive.

For deciding 
both these properties, it is sufficient to compute the value of $\Delta$ defined in Equation \eqref{Delta}.
\end{proof}

In the next theorem, we show that for injective GCAs, expansivity implies topological transitivity.  
For general CAs, to the best of our knowledge, this has not yet been established.
\begin{theorem}\label{espimplicatrans}
All injective GCAs that are expansive  are also topologically transitive.
\end{theorem}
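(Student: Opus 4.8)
The plan is to reduce the statement to the two kinds of building blocks produced by the \texttt{Decomposition} function and then to recombine. By Theorem~\ref{rem}, if $\glorule$ is expansive then every component $(\glorule_i,G_i)$ in $\texttt{Decomposition}(\glorule,G)$ is an injective expansive GCA, and each $G_i$ is either abelian (hence isomorphic to some $(\ZZ/p\ZZ)^n$) or a product of non-abelian isomorphic simple groups. For the latter kind, Theorem~\ref{simpleiso} immediately yields that $\glorule_i$ is topologically transitive, so nothing more is needed there. Thus the proof splits into two tasks: (i) showing that an \emph{abelian} expansive component is transitive, and (ii) showing that topological transitivity of all the components forces topological transitivity of $\glorule$ itself.

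For task (i), I would pass to the standard representation of an abelian GCA on $(\ZZ/p\ZZ)^n$ as a matrix of Laurent polynomials over the field $\ZZ/p\ZZ$, injectivity corresponding to invertibility of that matrix in $GL_n(\ZZ/p\ZZ[X,X^{-1}])$. The point is that expansivity is a strong, two-sided form of sensitivity, while an injective GCA is automatically surjective and hence bijective. One then invokes the explicit characterizations of topological transitivity for abelian GCAs available in the cited literature (e.g.\ \cite{DennunzioFormentiMargara2023}) and verifies that the conditions defining an expansive injective GCA imply them; alternatively one may route through Theorem~\ref{su_ab}, using that positive expansivity of $\HH=\glorule-\glorule^{-1}$ yields topological transitivity of $\HH$ and transferring this to $\glorule$. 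This step should be routine once the polynomial/matrix normal form is in place.

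The hard part will be task (ii), the recombination. Transitivity does not behave as simply as expansivity under the \texttt{Decomposition} steps: it passes to factors but not to subsystems, so one cannot merely quote Theorem~\ref{rem}. The plan is therefore to prove a separate \emph{lifting lemma}: if $H\trianglelefteq G$ is fully invariant, $\glorule(H^\ZZ)\subseteq H^\ZZ$, and both $\overline{\glorule}$ on $H^\ZZ$ and $\widetilde{\glorule}$ on $(G/H)^\ZZ$ are topologically transitive, then $\glorule$ is topologically transitive. For a single cylinder pair $U,V$ one first uses transitivity of $\widetilde{\glorule}$ to match the $G/H$-parts after some power $\glorule^n$, and then corrects inside the fiber: since $\glorule^n$ restricts to a bijection of $H^\ZZ$, the $\glorule^n$-image of a coset $xH^\ZZ$ is the full coset $\glorule^n(x)H^\ZZ$, and transitivity of $\overline{\glorule}$ lets one adjust the $H$-part on the relevant window. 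Iterating the lemma along the recursive calls of \texttt{Decomposition} then yields transitivity of $\glorule$.

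The main obstacle is precisely making this fiber correction uniform on a finite window while simultaneously respecting the $U$-constraint, since $\glorule^n(x)$ read on a window depends on $x$ over a much larger window. A clean way around the bookkeeping is to translate transitivity into a spectral or ergodic statement and run a short exact sequence argument: for the abelian pieces, by Halmos's criterion transitivity is equivalent to the absence of nontrivial finite orbits of the dual automorphism, and a nontrivial finite orbit of the middle term would project to one of the quotient (hence be trivial there) and therefore lie in the subgroup (hence be trivial there too); for the non-abelian simple pieces one instead uses that a suitable power of $\glorule_i$ equals a power of the shift (Theorem~\ref{simpleiso}), which combines trivially. Reconciling these two mechanisms across a mixed extension is where I expect the real work to lie.
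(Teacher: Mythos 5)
Your high-level structure (reduce to abelian components and to products of non-abelian isomorphic simple groups, and dispatch the latter via Theorem~\ref{simpleiso}) matches the paper, but the proposal leaves both of the remaining steps genuinely open. First, the recombination step that you yourself flag as ``where I expect the real work to lie'' is never closed: you state a lifting lemma (transitivity of $\overline{\glorule}$ and $\widetilde{\glorule}$ implies transitivity of $\glorule$), sketch a duality argument that only covers the abelian-by-abelian case, and explicitly defer the mixed case. The paper does not prove such a lemma here at all; it simply invokes \cite[Theorem 4]{nostro}, which already reduces topological transitivity to the components of the decomposition. As written, your proof of the reduction is incomplete.

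Second, and more importantly, your treatment of the abelian case contains no actual argument. Saying that one ``verifies that the conditions defining an expansive injective GCA imply'' the known transitivity criteria begs the question, since no such checkable condition for expansivity exists prior to Theorem~\ref{su_ab}; and the alternative route of ``transferring'' transitivity from $\HH=\glorule-\glorule^{-1}$ to $\glorule$ is an unjustified leap --- transitivity of $\glorule-\glorule^{-1}$ does not formally yield transitivity of $\glorule$ without further work. The paper's argument here is short and elementary, and it is the idea your proposal is missing: if $\glorule$ is injective but not transitive, then by the characterization in \cite{ShRo88} some $\glorule^n-I$ fails to be surjective; by the Garden of Eden theorem \cite{moore62,myhill63} this produces a \emph{finite} nonzero configuration $c$ with $\glorule^n(c)=c$, and translating $c$ far from the origin gives pairs of configurations whose orbits never separate beyond any fixed $\epsilon$ under any iterate of $\glorule$, contradicting expansivity. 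You should either supply this (or an equivalent) argument for the abelian components and prove your lifting lemma in full, or replace the recombination step by the citation the paper uses.
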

\begin{proof}
By~\cite[Theorem 4]{nostro}, it is sufficient to prove the thesis for abelian groups and for products of simple non-abelian isomorphic groups.

Let $\glorule$ be an injective GCA on an abelian group $G$. Suppose by contradiction that 
$\glorule$ is not topologically transitive. By~\cite{ShRo88}  we know that either $\glorule$
is not surjective or there exists $n>0$ such that $\glorule^n-I$ is not surjective, where $I$ denotes the identity map.
By hypothesis, since $\glorule$
is injective, it is also surjective.  %then $\glorule$ is not expansive. 
So, there exists %$n\in \ZZ$ \textbf{DEVE ESSERE POSITIVO} 
$n\in \N$ such that $\glorule^n-I$ is not surjective.
By the Garden of Eden Theorem~\cite{moore62,myhill63}, this implies that there exist two finite configurations $x,y \in G^\ZZ$ such that $(\glorule^n-I)(x)=(\glorule^n-I)(y)$  and, hence,
there is a finite configuration $c=x-y$ such that $(\glorule^n-I)(c)=e^\ZZ$ or, equivalently, $\glorule^n(c)=c$. 
%Let $x,y \in G^\ZZ$ such that $x-y=c$. 
Since $c$ is finite and $\glorule^n(c)=c$, for every $\epsilon>0$ we can find a sufficiently large   $j_\epsilon\in \N$ such that 
  $$\forall i\in\ZZ:\ d(\glorule^i(\sigma^{j_\epsilon}(x)),\glorule^i(\sigma^{j_\epsilon}(y)))<\epsilon.$$ This shows that $\glorule$ is not expansive.

In the case of products of simple non-abelian isomorphic groups, the thesis has been already proved in Theorem~\ref{simpleiso}.
\end{proof}

An alternative and completely different approach to proving Theorem~\ref{espimplicatrans} is to show that the traces of expansive GCAs are transitive subshifts of finite type~\cite{kari2025privcom}. 
%(Jarkko Kari, personal communication, August 2025)
\begin{observation}
We stress that, as a consequence of Theorem~\ref{espimplicatrans} and the results from~\cite{nostro} concerning the equivalence of topological transitivity to the main topological mixing and ergodic properties, it also holds that expansive (reversible) GCAs exhibit all those properties. 
\end{observation}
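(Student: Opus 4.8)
The plan is to obtain this observation as a direct corollary by chaining two results already available to us, without reproving either of them. The first is Theorem~\ref{espimplicatrans}, which tells us that every expansive (reversible) GCA is topologically transitive. The second is the collection of equivalences established in~\cite{nostro}, where topological transitivity of a GCA is shown to be equivalent to the principal topological mixing properties and to the main ergodic properties. The argument is then a one-line concatenation: expansivity forces transitivity, and transitivity is equivalent to each property in that list, so every such property transfers to an expansive GCA.

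Concretely, I would proceed in two steps. First, starting from the hypothesis that $\glorule$ is an expansive reversible GCA, I would apply Theorem~\ref{espimplicatrans} to deduce that $\glorule$ is topologically transitive. Second, I would invoke the equivalence theorems of~\cite{nostro} to promote topological transitivity to the full list of mixing and ergodic properties, thereby concluding that $\glorule$ exhibits all of them. No new dynamical analysis is introduced at either step; the content lies entirely in the two cited results.

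Because the statement is essentially a corollary, there is no genuine obstacle to overcome; the only point deserving a moment of attention is to confirm that the hypotheses under which~\cite{nostro} proves its equivalences are met in our setting. Those equivalences are phrased for GCAs and, wherever surjectivity is required, this is automatically supplied by reversibility, since injective CAs are surjective, as recalled in Section~\ref{cellular_a}. Once this compatibility is checked, the chain closes and the observation follows immediately.
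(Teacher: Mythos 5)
Your proposal matches the paper exactly: the remark is justified there precisely by chaining Theorem~\ref{espimplicatrans} (expansivity implies transitivity) with the equivalence results of~\cite{nostro}, just as you do. Your added check that surjectivity requirements are met via reversibility is a sensible touch, but the route is the same.
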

Furthermore, since for  GCAs on simple non-abelian isomorphic groups, expansivity is equivalent to topologically transitivity (Theorem~\ref{simpleiso}), any GCA that is topologically transitive but not expansive must be defined on an abelian group. In the following example, we show that such a GCA indeed exists. Before illustrating it, recall that a Linear CA $\glorule$ over $G=(\ZZ/m\ZZ)^n$ is a GCA  on $G$ defined by $n\times n$ matrices with coefficients in $\ZZ/m\ZZ$ and by a local rule  which is a linear combination  matrices-vectors. In this way,  an $n\times n$ matrix $M_{\glorule}$ of Laurent polynomials in the variables $X$ and $X^{-1}$ and with coefficients in $\ZZ/m\ZZ$ is associated with $\glorule$. Almost all the dynamical properties of $\glorule$ have been characterized in terms of easy-to-check conditions on the characteristic polynomial of $M_{\glorule}$.
\begin{example}\label{ex:1}
%There exists a topologically transitive injective GCA that is not expansive. 
Let $G=(\ZZ/2\ZZ)^2$ and let $\glorule$
%, represented as a matrix, be
be the Linear CA over $G$ having the following matrix associated with it:
%defined by the matrix
$$M_\glorule=\begin{bmatrix}
0&1\\
1&X
\end{bmatrix}.
$$
Since $\det(M_\glorule)=1$, it holds that 
$\glorule$ is injective.
The characteristic polynomial of $M_\glorule$  is $\chi(t)=1+t^2 + tX$. 
Since $\gcd (1+t^2,t)=1$, by the results from~\cite{DBLP:journals/isci/DennunzioFM24} we get that 
$\glorule$ is topologically transitive. Now, the matrix associated with 
$\glorule^{-1}$  %is defined by the matrix 
is 
$$M_\glorule^{-1}=\begin{bmatrix}
X&1\\
1&0
\end{bmatrix}.
$$
We are going to show that $\glorule$ 
is not expansive. 
Let $c \in R(k)$ with $k > 0$. Since neither $M_\glorule$ nor $M_\glorule^{-1}$ contains polynomials in which $X$ appears with negative exponents, we get that 
$
pos(\glorule^i(c)) \ge k$ for every $i \in \mathbb{Z}. 
$ Hence, we conclude that $\glorule$ is not expansive.
\end{example}

Based on the results proved so far, we can now establish the most significant result of this work.
\begin{theorem}\label{main2}
 Expansivity for  GCAs is decidable.
\end{theorem}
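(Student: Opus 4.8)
The plan is to assemble a decision procedure directly from the structural results already established, reducing the expansivity question for an arbitrary GCA to finitely many instances on the building blocks produced by the {\tt Decomposition} function, each of which has already been shown to be decidable.

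First I would check whether the input GCA $\glorule$ is injective. Since injectivity of $D$-GCAs is decidable~\cite{BeaurK24}, this is an effective preliminary step; and since expansivity is a property of reversible systems only (Definition~\ref{def:expansivity}), if $\glorule$ fails to be injective the algorithm immediately returns ``No''. Assuming $\glorule$ is injective, I would then run {\tt Decomposition} on $(\glorule,G)$. At each recursive call this procedure tests whether the current group is invariantly simple and, if not, locates a non-trivial proper fully invariant subgroup $H$ and forms the GCAs $\widetilde{\glorule}_H$ and $\overline{\glorule}_H$ on the strictly smaller groups $G/H$ and $H$. Because each step replaces a group by two quotients of strictly smaller order, the recursion terminates and outputs a finite collection $\{(\glorule_1,G_1),\dots,(\glorule_k,G_k)\}$ in which every $G_i$ is either abelian (hence isomorphic to $(\ZZ/p\ZZ)^n$) or a product of non-abelian isomorphic simple groups, and every $\glorule_i$ is injective.

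By Theorem~\ref{rem}, $\glorule$ is expansive if and only if each $\glorule_i$ is expansive, so it suffices to decide expansivity for each factor separately. When $G_i$ is abelian, Theorem~\ref{su_ab} reduces expansivity of $\glorule_i$ to positive expansivity of $\HH_i=\glorule_i-\glorule_i^{-1}$; since the inverse of an injective abelian GCA is computable and positive expansivity is decidable on abelian groups~\cite{DennunzioFormentiMargara2023}, this case is decidable. When $G_i$ is a product of non-abelian isomorphic simple groups, Theorem~\ref{simpleiso} shows that expansivity coincides with topological transitivity and is decided by computing the integer $\Delta$ of Equation~\eqref{Delta}. Thus each of the finitely many expansivity questions is decidable, and since $\glorule$ is expansive exactly when all of them answer ``Yes'', their conjunction is decidable as well.

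The main point requiring care is the effectiveness of {\tt Decomposition} itself: one must verify that, on a finite group, deciding invariant simplicity and exhibiting a witnessing non-trivial proper fully invariant subgroup are computable operations, and that the strict decrease of group order guarantees termination of the recursion. Once this is granted, everything else is a routine combination of Theorems~\ref{rem},~\ref{su_ab}, and~\ref{simpleiso} together with the decidability of injectivity.
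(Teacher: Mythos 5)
Your proposal is correct and follows essentially the same route as the paper's own proof: apply {\tt Decomposition}, invoke Theorem~\ref{rem} (equivalently, repeated applications of Theorem~\ref{expansive}) to reduce to the factors, and then decide each factor via Theorem~\ref{su_ab} in the abelian case and Theorem~\ref{simpleiso} in the non-abelian case. The only additions are the explicit injectivity pre-check and the remark on the effectiveness of {\tt Decomposition}, both of which are sensible elaborations rather than a different argument.
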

\begin{proof}
Let $\glorule$ be an injective GCA on a group $G$.
Let $$\{(\glorule_1,G_1),\dots,(\glorule_k,G_k)\}={\tt Decomposition}(\glorule,G).$$
By Theorem~\ref{expansive},  $\glorule$   is expansive if and only if 
all the GCAs $(\glorule_i,G_i)$  are expansive. 
By Theorem~\ref{su_ab}, if $G_i$ is abelian testing the expansivity of $(\glorule_i,G_i)$ 
is decidable.
By Theorem~\ref{simpleiso}, if $G_i$ is not abelian expansivity is decidable as well. 
\end{proof}

%========================================================
%========================================================
%========================================================
\section{Conclusions and further work}
%========================================================
%========================================================
%========================================================
\label{conclu}

The dynamical behavior of group automorphisms is a central topic in the theory of dynamical systems. In 1975, Nobuo Aoki~\cite{Aoki1975} proved that ergodic automorphisms of compact metric groups are isomorphic to Bernoulli shifts. This result highlights the strong connection between ergodic automorphisms and Bernoulli shifts, which serve as a prototypical example of expansive dynamical systems.
In this paper, we have focused on a specific class of compact metric groups, namely $G^\mathbb{Z}$ with $G$ a finite group, and on a specific class of automorphisms, namely those that are continuous and commute with the shift map, i.e., the injective group cellular automata.
For this class of automorphisms, we proved that expansivity is a decidable property and implies topological transitivity (that is not known to be decidable), which in turn is equivalent to ergodicity. Moreover, we showed that there exist topologically transitive injective GCAs that are not expansive, demonstrating that the isomorphism proposed by Aoki does not preserve expansivity.

Future work includes incorporating into our results a study of the computational cost of the algorithms we propose for deciding the expansivity of GCAs, which was not addressed in the present work. It would also be interesting to extend our analysis to more general algebraic structures beyond groups, and to automorphisms that do not commute with the shift.

\paragraph{Acknowledgements}
This work was partially supported by the PRIN 2022 PNRR project ``Cellular Automata Synthesis for Cryptography Applications (CASCA)'' (P2022MPFRT) funded by the European Union – Next Generation EU, and by the HORIZON-MSCA-2022-SE-01 project 101131549 ``Application-driven Challenges for Automata Networks and Complex Systems (ACANCOS)''.

\bibliographystyle{elsarticle-num}
\bibliography{matrix_groups_bib}

\begin{thebibliography}{10}
\expandafter\ifx\csname url\endcsname\relax
  \def\url#1{\texttt{#1}}\fi
\expandafter\ifx\csname urlprefix\endcsname\relax\def\urlprefix{URL }\fi
\expandafter\ifx\csname href\endcsname\relax
  \def\href#1#2{#2} \def\path#1{#1}\fi

\bibitem{hedlund69}
G.~A. Hedlund, Endomorphisms and automorphisms of the shift dynamical system, Mathematical Systems Theory 3 (1969) 320--375.

\bibitem{Kari2005Survey}
J.~Kari, Theory of cellular automata: A survey, Theoretical Computer Science 334~(1-3) (2005) 3--33.
\newblock \href {https://doi.org/10.1016/j.tcs.2004.11.021} {\path{doi:10.1016/j.tcs.2004.11.021}}.

\bibitem{AnghelescuIS07}
P.~Anghelescu, S.~Ionita, E.~Sofron, Block encryption using hybrid additive cellular automata, in: A.~K{\"{o}}nig, M.~K{\"{o}}ppen, N.~K. Kasabov, A.~Abraham (Eds.), 7th International Conference on Hybrid Intelligent Systems, {HIS} 2007, Kaiserslautern, Germany, September 17-19, 2007, {IEEE} Computer Society, 2007, pp. 132--137.

\bibitem{MARTINDELREY20051356}
A.~M. del Rey, J.~P. Mateus, G.~R. S{\'a}nchez, A secret sharing scheme based on cellular automata, Applied Mathematics and Computation 170~(2) (2005) 1356 -- 1364.

\bibitem{DennunzioFGM21INS}
A.~Dennunzio, E.~Formenti, D.~Grinberg, L.~Margara, Decidable characterizations of dynamical properties for additive cellular automata over a finite abelian group with applications to data encryption, Information Sciences 563 (2021) 183--195.

\bibitem{kari2000}
J.~Kari, Linear cellular automata with multiple state variables, in: H.~Reichel, S.~Tison (Eds.), STACS 2000, Vol. 1770 of LNCS, Springer-Verlag, 2000, pp. 110--121.

\bibitem{RubioEWRS04}
C.~F. Rubio, L.~H. Encinas, S.~H. White, {\'{A}}.~M. del Rey, G.~R. S{\'{a}}nchez, The use of linear hybrid cellular automata as pseudo random bit generators in cryptography, Neural Parallel {\&} Scientific Comp. 12~(2) (2004) 175--192.

\bibitem{NandiKC94}
S.~Nandi, B.~K. Kar, P.~P. Chaudhuri, Theory and applications of cellular automata in cryptography, {IEEE} Trans. Computers 43~(12) (1994) 1346--1357.

\bibitem{Hiraide1987}
K.~Hiraide, Expansive homeomorphisms of compact surfaces are pseudo-anosov, Osaka Journal of Mathematics 27 (1990) 117--162.

\bibitem{KitchensSchmidt1989}
B.~Kitchens, K.~Schmidt, Automorphisms of compact groups, Ergodic Theory and Dynamical Systems 9~(3) (1989) 495--508.

\bibitem{Lind1982ExponentialRecurrence}
D.~Lind, \href{https://doi.org/10.1007/BF02760537}{Ergodic group automorphisms are exponentially recurrent}, Israel Journal of Mathematics 41~(4) (1982) 313--320.
\newblock \href {https://doi.org/10.1007/BF02760537} {\path{doi:10.1007/BF02760537}}.
\newline\urlprefix\url{https://doi.org/10.1007/BF02760537}

\bibitem{Morales2019}
C.~A. Morales, \href{https://doi.org/10.1007/s00574-019-00133-w}{A survey on expansive homeomorphisms}, Boletim da Sociedade Brasileira de Matemática 50 (2019) 41--76.
\newblock \href {https://doi.org/10.1007/s00574-019-00133-w} {\path{doi:10.1007/s00574-019-00133-w}}.
\newline\urlprefix\url{https://doi.org/10.1007/s00574-019-00133-w}

\bibitem{Reddy1983}
W.~Reddy, Pointwise and setwise expansive homeomorphisms, Journal of the London Mathematical Society 27~(2) (1983) 327--336.
\newblock \href {https://doi.org/10.1112/jlms/s2-27.2.327} {\path{doi:10.1112/jlms/s2-27.2.327}}.

\bibitem{ChangChang2016}
C.-H. Chang, H.~Chang, On the bernoulli automorphism of reversible linear cellular automata, Information Sciences 345 (2016) 217--225.
\newblock \href {https://doi.org/10.1016/j.ins.2016.01.062} {\path{doi:10.1016/j.ins.2016.01.062}}.

\bibitem{Aoki1975}
N.~Aoki, Ergodic automorphisms of compact metric groups are isomorphic to bernoulli shifts, Publications math\'ematiques et informatique de Rennes S4 (1975) 1--10.

\bibitem{Shah2020}
R.~Shah, \href{https://nyjm.albany.edu/j/2020/26-15.html}{Expansive automorphisms on locally compact groups}, New York Journal of Mathematics 26 (2020) 285--302.
\newline\urlprefix\url{https://nyjm.albany.edu/j/2020/26-15.html}

\bibitem{Kari94}
J.~Kari, {R}ice's theorem for the limit sets of cellular automata, Theoretical Computer Science 127~(2) (1994) 229--254.

\bibitem{amoroso1972decision}
S.~Amoroso, Y.~Patt, A decision procedure for surjectivity and injectivity of parallel maps for tessellation structures, Journal of Computer and System Sciences 6~(5) (1972) 448--464.

\bibitem{kari1994reversibility}
J.~Kari, Reversibility and surjectivity of cellular automata in dimension d $\geq$ 2, Information and Computation 118~(1) (1994) 192--210.

\bibitem{Hurd_Kari_Culik_1992}
L.~P. Hurd, J.~Kari, K.~Culik, The topological entropy of cellular automata is uncomputable, Ergodic Theory and Dynamical Systems 12~(2) (1992) 255--265.

\bibitem{DamicoMM03}
M.~d'Amico, G.~Manzini, L.~Margara, On computing the entropy of cellular automata, Theoretical Computer Science 290~(3) (2003) 1629--1646.

\bibitem{Lukkarila10}
V.~Lukkarila, Sensitivity and topological mixing are undecidable for reversible one-dimensional cellular automata, Journal of Cellular Automata 5~(3) (2010) 241--272.

\bibitem{BeaurK24}
P.~B{\'{e}}aur, J.~Kari, Effective projections on group shifts to decide properties of group cellular automata, Int. J. Found. Comput. Sci. 35~(1{\&}2) (2024) 77--100.

\bibitem{nostro}
N.~Castronuovo, A.~Dennunzio, L.~Margara, A divide and conquer algorithm for deciding group cellular automata dynamics, \url{https://arxiv.org/abs/2507.09761} (2025).

\bibitem{DennunzioFGM2020TCS}
A.~Dennunzio, E.~Formenti, D.~Grinberg, L.~Margara, Dynamical behavior of additive cellular automata over finite abelian groups, Theoretical Computer Science 843 (2020) 45--56.

\bibitem{Dennunzio20JCSS}
A.~Dennunzio, E.~Formenti, D.~Grinberg, L.~Margara, An efficiently computable characterization of stability and instability for linear cellular automata, J. Comput. Syst. Sci. 122 (2021) 63--71.

\bibitem{DennunzioFMMP19}
A.~Dennunzio, E.~Formenti, L.~Manzoni, L.~Margara, A.~E. Porreca, On the dynamical behaviour of linear higher-order cellular automata and its decidability, Information Sciences 486 (2019) 73--87.

\bibitem{DennunzioFormentiMargara2023}
A.~Dennunzio, E.~Formenti, L.~Margara, An easy to check characterization of positive expansivity for additive cellular automata over a finite abelian group, IEEE Access 11 (2023) 121246--121255.
\newblock \href {https://doi.org/10.1109/ACCESS.2023.3328540} {\path{doi:10.1109/ACCESS.2023.3328540}}.

\bibitem{DBLP:journals/isci/DennunzioFM24}
A.~Dennunzio, E.~Formenti, L.~Margara, An efficient algorithm deciding chaos for linear cellular automata over $(\mathbb{Z}/m\mathbb{Z})^n$ with applications to data encryption, Inf. Sci. 657 (2024) 119942.

\bibitem{ManziniM99}
G.~Manzini, L.~Margara, A complete and efficiently computable topological classification of d-dimensional linear cellular automata over {$\mathbb{Z}_m$}, Theoretical Computer Science 221~(1-2) (1999) 157--177.

\bibitem{FinelliMM98}
M.~Finelli, G.~Manzini, L.~Margara, Lyapunov exponents versus expansivity and sensitivity in cellular automata, Journal of Complexity 14~(2) (1998) 210--233.

\bibitem{GCA24}
A.~Dennunzio, E.~Formenti, L.~Margara, On the dynamical behavior of cellular automata on finite groups, IEEE Access 12 (2024) 122061--122077.

\bibitem{SaloT12}
V.~Salo, I.~T{\"{o}}rm{\"{a}}, On shift spaces with algebraic structure, in: S.~B. Cooper, A.~Dawar, B.~L{\"{o}}we (Eds.), How the World Computes - Turing Centenary Conference and 8th Conference on Computability in Europe, CiE 2012, Cambridge, UK, June 18-23, 2012. Proceedings, Vol. 7318 of Lecture Notes in Computer Science, Springer, 2012, pp. 636--645.

\bibitem{SaloT14}
V.~Salo, I.~T{\"{o}}rm{\"{a}}, Color blind cellular automata, J. Cell. Autom. 9~(5-6) (2014) 477--509.

\bibitem{GlocknerRaja2017}
H.~Glöckner, C.~R.~E. Raja, Expansive automorphisms of totally disconnected, locally compact groups, Journal of Group Theory 20~(3) (2017) 589--619.
\newblock \href {https://doi.org/10.1515/jgth-2016-0051} {\path{doi:10.1515/jgth-2016-0051}}.

\bibitem{ShRo88}
M.~Shirvani, T.~D. Rogers, Ergodic endomorphisms of compact abelian groups, Communications in {M}athematical {P}hysics 118 (1988) 401--410.

\bibitem{moore62}
E.~F. Moore, Machine models of self-reproduction, Proceedings of Symposia in Applied Mathematics 14 (1962) 13--33.

\bibitem{myhill63}
J.~Myhill, The converse to {M}oore's garden-of-eden theorem, Proceedings of the American Mathematical Society 14 (1963) 685--686.

\bibitem{kari2025privcom}
J.~Kari, private communication (Aug. 2025).

\end{thebibliography}

\end{document}